\newtheorem{theorem}{Theorem}
\newtheorem{lemma}{Lemma}
\newcommand{\Geqt}{\ensuremath{G_\text{eqt}}}
\begin{document}
\conferenceinfo{NANOCOM}{2015 Boston, Massachusetts, USA}
\title{An Algorithmic Framework for Shape Formation Problems in Self-Organizing Particle Systems
}

\numberofauthors{5} 
\author{
\alignauthor
Zahra Derakhshandeh
\\
       {\small\affaddr{Computer Science and Engineering, CIDSE,}}\\
       {\small\affaddr{Arizona State University, USA}}\\
       {\tiny\email{zderakhs@asu.edu}}
\alignauthor
Robert Gmyr
\\
       {\small\affaddr{Dept. of Computer Science,}}\\
       {\small\affaddr{University of Paderborn, Germany}}\\
       {\tiny\email{gmyr@mail.upb.de}}
\alignauthor Andr\'ea W.\ Richa
\\
       {\small\affaddr{Computer Science and Engineering, CIDSE,}}\\
       {\small\affaddr{Arizona State University, USA}}\\
       {\tiny\email{aricha@asu.edu}}
\and  
\alignauthor Christian Scheideler\\
       {\small\affaddr{Dept. of Computer Science,}}\\
       {\small\affaddr{University of Paderborn, Germany}}\\
       {\tiny\email{scheideler@upb.de}}
\alignauthor Thim Strothmann\\
       {\small\affaddr{Dept. of Computer Science,}}\\
       {\small\affaddr{University of Paderborn, Germany}}\\
       {\tiny\email{thim@mail.upb.de}}
}

\maketitle
\begin{abstract}
Many proposals have already been
  made for realizing programmable matter, ranging from shape-changing molecules,
  DNA tiles, and synthetic cells to reconfigurable modular
  robotics. Envisioning systems of nano-sensors devices, we are particularly interested in programmable matter consisting of
  systems of simple computational elements, called {\em particles}, that can establish and release
  bonds and can actively move in a self-organized way,
  and in shape formation problems relevant for programmable matter in those self-organizing particle systems (SOPS). In this paper, we present a general algorithmic framework for shape formation problems in  SOPS, and show direct applications of this framework to the problems of having the particle system self-organize to  form a hexagonal or triangular shape. Our algorithms utilize only local control, require only constant-size memory particles, and are asymptotically optimal both in terms 
     of the total number of movements needed to reach the desired shape configuration. 
  \end{abstract}

\section{Introduction}
Imagine that we had a piece of matter that can change its physical
properties
like shape, density, conductivity, or color in a programmable fashion
based on
either user input or autonomous sensing. This is the vision behind what is
commonly known as {\em programmable matter}.
Programmable matter has been the subject of many recent novel distributed
computing proposals, ranging from shape-changing molecules,
  DNA tiles, and synthetic cells to reconfigurable modular
  robotics. Each of these proposals pursued solutions for  specific
application
scenarios with their own, special capabilities and constraints.

We envision systems of nano-sensors devices that will have very limited
computational capabilities individually, but which can 
collaborate to reach a lot more as a collective. Ideally, those
nano-sensor devices will be able to self-organize in order to achieve a
desired collective goal without the need of central control or external
(in particular, human) intervention. For example, one could envision using
a system of self-organizing nano-sensor devices  to
identify and coat (and possibly repair) leaks on a nuclear reactor without
the need for human intervention; self-organizing systems of nano-sensor
devices could also be used to monitor environmental and structural
conditions in abandoned mines, on the exterior of an airplane or
spacecraft, bridges and other structures, possibly also
self-repairing the structure--- i.e., realizing what
has been coined as "smart paint". The applications in the health arena are
also endless, e.g., self-organizing nano-sensor devices could be used
within our bodies to detect and coat an area where internal bleeding
occurs, eliminating the need of immediate surgery, or they could be used
to identify and isolate tumor/malignouos cells. In many applications,
there may be a specific shape that one would like the system to assume
(e.g., a disc, or a line, or even any compact shape).

Hence, from an algorithmic point-of-view, we are interested in
programmable matter consisting of
  systems of simple computational elements, called {\em particles}, that
can establish and release (communication or physical)
  bonds and can actively move in a self-organized way,
  and in general shape formation problems 
  in those self-organizing particle systems (SOPS).

\vspace{-.2in}
\subsection{Geometric Amoebot model} \label{sec:model}

We will use the geometric amoebot model presented in~\cite{arxivDNA, spaa-ba14} as our basic model for SOPS.

In all of our shape formation algorithms, the 
set of particles will maintain 
a connected structure at all times. 
We assume that we have a graph $G(V,E)$ that represents the relative positions that a connected set of particles may assume --- i.e., 
$V$ represents all possible positions of a particle (relative to the other
particles in their structure) and $E$ represents all possible transitions
between nodes. In the geometric amoebot model we assume that $G=$ $\Geqt$, where $\Geqt$ is the infinite regular  triangular grid graph\footnote{The triangular grid graph $\Geqt$ is the dual graph of a regular hexagonal tiling in 2D space.}(see Part (a) of Figure~\ref{fig:graph}).

\begin{figure*}
    \centering
    \includegraphics[scale=0.4]{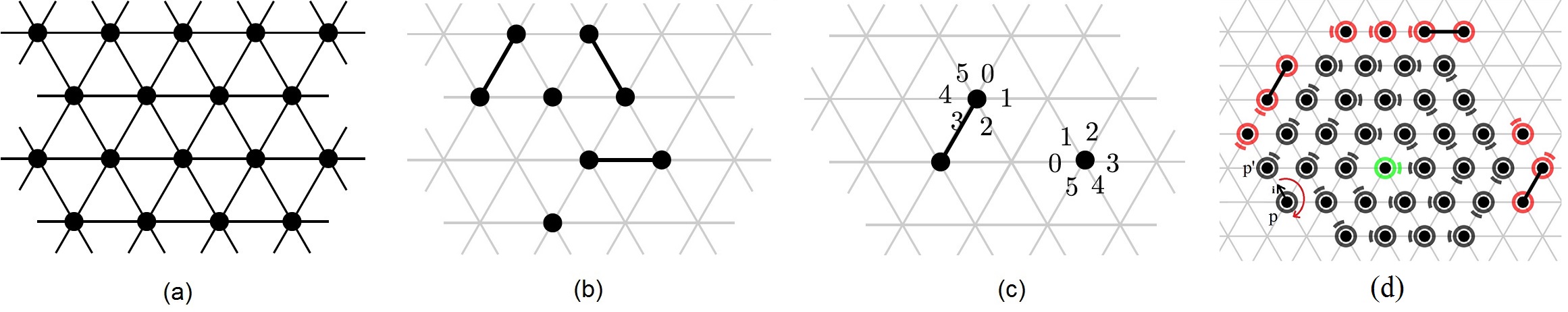}
		\vspace{-.15in}
    \caption{\small
			Part (a) shows a section of $\Geqt$; 
    	nodes of $\Geqt$ are shown as black circles.
    	Part (b) shows five particles on $\Geqt$: 
			 the underlying graph  $\Geqt$ as a gray mesh; a particle occupying a single node is depicted as a black circle,
    	and a particle occupying two nodes is depicted as two black circles connected by an edge.
			Part (c) depicts two particles occupying two non-adjacent positions on $\Geqt$; the particles have different offsets for their head bond labelings. 
			Part (d) shows an intermediate configuration of the HEX algorithm. The seed is depicted in green, retired particles are black,  and roots are red. Particle $p$ is the last added particle to the retired structure. Hence, edge $i$ connects $p$ to the retired particle $p'$ (edge $i$ has the flag $p'.snakedir$). The red arrow depicts the process of setting $p.snakedir$ in clockwise manner for $p$. 
    }
    \label{fig:graph}
\end{figure*}

We briefly recall the main properties of the geometric amoebot model. Each particle occupies either a single node or a pair of adjacent nodes in
$\Geqt$, and every node can be occupied by at most one particle. Two particles
occupying adjacent nodes are \emph{connected}, and we refer to such particles
as \emph{neighbors}. 

Particles move through \emph{expansions} and \emph{contractions}: If a
particle occupies one node (i.e., it is \emph{contracted}), it can expand to
an unoccupied adjacent node to occupy two nodes. If a particle occupies two
nodes (i.e., it is \emph{expanded}), it can contract to one of these nodes to
occupy only a single node. Performing movements via expansions and
contractions may represent the way particles physically move, or may be seen as a logical "look-ahead and then move" logical operation. It has several advantages, including allowing particles to abort a movement if there is a conflict (see~\cite{arxivDNA} for more details). 
A particle always knows whether
it is contracted or expanded --- in the latter, it also knows along which edge it expands --- and
this information will be available to neighboring particles.
A \emph{handover} allows particles to stay connected as they move; two scenarios are possible: a) a contracted particle $p$
can "push" a neighboring expanded particle $q$ and expand into the neighboring
node previously occupied by $q$, forcing $q$ to contract, or b) an expanded
particle $p$ can "pull" a neighboring contracted particle $q$ to a cell
occupied by it thereby expanding that particle to that cell, which allows $p$
to contract to its other cell.
In part(b) of Figure~\ref{fig:graph}, we illustrate a set of particles (some
contracted, some expanded) on the underlying graph $\Geqt$.

Particles are \emph{anonymous} but the bonds of each
particle have unique labels, which implies that       
a particle can uniquely identify each of its outgoing edges.
Moreover, for each particle the bonds are labeled in a consecutive way in clockwise
direction so that every particle has the same sense of clockwise direction,
but the particles may not have a common sense of orientation in a sense that
they have different offsets of the labelings (see Figure~\ref{fig:graph}, Part (c)).
Each particle has a constant-size local memory in which it can store some bounded amount of
information, and any pair of connected particles has a bounded shared memory
that can be read and written by both of them and that can be accessed using
the edge label associated with that connection.
 We assume the standard asynchronous model from distributed computing,
where the system of particles progresses by performing atomic actions, each of which affects the configuration of one or two particles. 
Whenever a particle is activated (i.e., performs an atomic action), it can perform an
arbitrary bounded amount of computation (involving its local memory as well as
the shared memories with its neighboring particles) followed by no or a single
movement. A \emph{round} is over once every particle has been activated at
least once.

\subsection{Our Contributions}

In this paper, we present a {\em general algorithmic  framework for shape formation problems} in SOPS, which constitutes of two basic algorithmic primitives:  the {\em spanning forest} primitive and the {\em snake formation} primitive. We present concrete applications of these two primitives to two specific shape formation problems, namely to the problems of  having the system of particles self-organize to form a {\em hexagonal shape} and to form a {\em triangular shape}. Both the hexagonal shape and the triangular shape formation algorithms are  {\em optimal} with respect to {\em work}, which we measure by the total number of particle movements needed to reach the desired shape configuration, 
as we prove in Theorems~\ref{thm:hexagon} and \ref{thm:triangle}. Our algorithms rely only on local information (e.g., particles do not have ids, nor do they know $n$, the total number of particles, or have any sort of global coordinate/orientation system), and require only constant-size memory particles. 

\subsection{Related Work}

Many approaches related to programmable matter have recently been proposed.
One can distinguish between active and passive systems. In passive systems (e.g., DNA computing~\cite{Adl94,BDLS96,
WLWS98}, tile self-assembly systems~\cite{doty2012,patitz2014,Woods2013intrinsic}),) the
particles either do not have any intelligence at all (but just move and bond
based on their structural properties or due to chemical interactions with the
environment), or they have limited computational capabilities but cannot
control their movements. 
We will not describe passive models in detail as they are only of little relevance for our approach.
On the other hand in \emph{active systems}, computational
particles can control the way they act and move in order to solve a specific task.
Robotic swarms, and modular robotic systems are
some examples of active programmable matter systems.

In the area of \textit{swarm robotics} it is usually assumed that there is a collection of autonomous robots
that have limited sensing, and communication ranges, and that can freely move in a given area.
They follow a variety of goals, 
 including for example shape formation problems (e.g.,~\cite{fl08,kilobots}). 
Surveys of recent results in swarm robotics can be found in~\cite{Ker12,McL08}.
While the analytical techniques developed in the area of swarm robotics and natural swarms are of some relevance for this work, the individual units in those systems have more powerful communication and processing capabilities than in the systems we consider.

The field of \textit{modular self-reconfigurable robotic systems} focuses on intra-robotic aspects
such as the design, fabrication, motion planning, and control of autonomous kinematic machines with variable morphology
(see e.g.,~\cite{FNKB88,YSS+07}).
\textit{Metamorphic robots}  form a subclass of self-reconfigurable robots that share some of the characteristics of our geometric model~\cite{Chi94}.
The hardware development in the field of self-reconfigurable robotics has been complemented
by a number of algorithmic advances (e.g.,~\cite{BKRT04,WWA04,kilobots}),
but so far
mechanisms that automatically scale from a few to hundreds or thousands of individual units are still under investigation,
and no rigorous theoretical foundation is available yet.

The \emph{nubot} model~\cite{winfree13,chen2013parallel} 
aims at providing the theoretical
framework that would allow for
a more rigorous algorithmic study of biomolecular-inspired systems,
more specifically of self-assembly systems with active molecular
components. While bio-molecular inspired systems share many similarities with our SOPS, there are many differences --- e.g.,
 there is always an arbitrarily large supply of "extra" particles that can be added to the system as needed, and the system allows for an additional (non-local) notion of rigid-body movement.

\section{Shape Formation}
\label{sec:shapeFormation}

In this paper we focus on solving \emph{shape formation problems} in the geometric amoebot model 
 starting from any initial connected configuration of particles.
 We present a general algorithmic framework for shape formation problems and then specifically  we investigate the \emph{Hexagonal Shape Formation (HEX)}  and the \emph{Triangular Shape Formation (TRI)}  problems where the desired shape is a hexagon and a triangle respectively.  We formally define a shape formation problem as a tuple $\mathcal{M = (I, G)}$
where $\mathcal{I}$ and $\mathcal{G}$ are sets of connected configurations.
We say $\mathcal{I}$ is the set of possible initial configurations and $\mathcal{G}$ is the set of goal configurations.
 Accordingly, for the HEX problem, $\mathcal{G}$ would be all configurations where the positions of the set of particles induce a hexagon on $\Geqt$ (note that depending on the number of particles the constructed shape may not necessarily be a perfect hexagon since the outer layer of the hexagon may not be fully complete). Similarly, for the TRI problem, $\mathcal{G}$ is equal to  the set of all configurations that constitute a triangle in $\Geqt$ (except for possibly the outer layer of the triangle, which may be partially full).
 We say an algorithm $\mathcal{A}$ 
\emph{solves} a shape formation problem $\mathcal{M}$ if for any execution of $\mathcal{A}$ on a system in an arbitrary configuration from $\mathcal{I}$,
$\mathcal{A}$ \emph{terminates} (i.e., the execution eventually reaches a configuration in which each particle does not move anymore) in one of the valid configurations in $\mathcal{G}$.

Before we proceed, we provide some preliminaries. 
For all algorithms we assume that there is a specific particle we call the \emph{seed} particle, which provides the starting point for constructing the respective shape. If a seed is not available, one can be chosen using the leader election algorithm proposed in \cite{arxivDNA} . We define the set of {\em states} that a particle can be in as
\emph{inactive}, \emph{follower}, \emph{root}, and \emph{retired}. Initially, all particles are inactive, except the seed particle, which is always in a {\em retired} state.
In addition to its state, each particle $p$ may maintain a constant number of {\em flags}
in its shared memory.
For an expanded particle, we denote the node the particle last expanded into
as the \emph{head} of the particle and call the other occupied node its
\emph{tail}.
In our algorithm, we assume that every time a particle contracts, it contracts out of its tail. Note that with this convention, the node occupied by the head of a particle still is occupied by that particle after a contraction. 
Part (c) of Figure~\ref{fig:graph} shows an example of the labeling of the heads of two particles on $\Geqt$.

Generally speaking, the shape formation algorithms we propose for hexagonal and triangular shapes progress as follows. Particles organize themselves into a {\em spanning set of disjoint trees} where the roots of the trees are non-retired particles adjacent to the partially constructed shape structure (consisting of all retired particles). Root particles lead the way by moving in a predefined direction around the current structure. 
 
The remaining particles (i.e., the followers) follow behind the leading root particles, hence the system flattens out towards the direction of movement. 
Once the leading particles reach a valid position where the shape can be extended (following the rules for the {\em snake formation} for the particular shape), they stop moving and change their state to retired as well. 
This process continues until all particles become retired.  Note that the spanning forest component of this general approach is the same for any shape formation algorithm: It is only in the rules that determine the next valid position to be filled in the shape structure being built that the respective algorithms differ. We determine the next valid position to be filled sequentially following a {\em snake} (i.e., a line of consecutive positions in $\Geqt$),  that fills in the space of the respective shape structure and scales naturally with the number of particles in the system.

\subsection{Spanning Forest Algorithm}
\label{sec:spanningForestAlgorithm}
The \emph{Spanning Forest} algorithm primitive, given in Algorithm~\ref{alg:spanningForestAlgorithm},
is a building block we use for all of our shape formation problems. This primitive was also used in~\cite{arxivDNA}, where we present a preliminary self-organizing algorithm for forming a straight line of particles. We present the algorithm here for completeness.
 Each particle $p$ continuously runs
Algorithm~\ref{alg:spanningForestAlgorithm} until it becomes retired. If
particle $p$ is a follower, it stores a flag $p.parent$ in its shared memory
corresponding to the edge adjacent to its parent $p'$ in the spanning forest
(any particle $q$ can then easily check if $p$ is a child of $q$).

	Initially all system particles, except the seed, are inactive. In a nutshell, the particles that are touching the seed or other retired particle 
 become roots; the root particles move around the partially constructed shape structure in a clockwise manner until they find a valid position on the snake and become retired; follower particles follow the movement of the respective root until they become roots themselves.
As we will see later, the initial snapshots of Figures~\ref{fig:hexagonsnapshots} and~\ref{fig:trianglesnapshots} illustrate the spanning forest formation for the respective initial particle configurations.

\begin{algorithm*}[htb]
    Depending on $p$'s current state, a particle $p$ behaves as described below:
		
        \begin{tabularx}{\textwidth}{lX}
        \textbf{inactive}: &
                 If $p$ is connected to a retired particle, then $p$ becomes a {\em root} particle. Otherwise,
             if an adjacent particle $p'$ is a root or a follower,
        $p$ sets the flag $p.parent$ on the shared memory corresponding to the edge to $p'$ and becomes a {\em follower}. If none of the above applies, it remains inactive.
        \\
        \textbf{follower}: &
        If $p$ is contracted and connected a retired particle, then $p$ becomes a {\em root} particle.
         Otherwise, it considers the following three cases: $(i)$ if $p$ is contracted and $p$'s parent $p'$  is expanded, then $p$ expands in the direction given by $p.parent$ in a handover with $p'$, and may need to adjust $p.parent$ to still point to particle $p'$ after the handover;
         $(ii)$ if $p$ is expanded and has a contracted child particle $p'$,
         then $p$ executes a handover with $p'$; $(iii)$ if $p$ is expanded,  has no children,
        and $p$ has no inactive neighbor, then $p$ contracts.
        \\
        \textbf{root}: &
       Particle $p$ runs the corresponding snake formation algorithm (Algorithm~\ref{alg:retiredConditionHexagon} or~\ref{alg:retiredConditionTriangle}, for HEX or TRI resp.), and becomes {\em retired} accordingly. Otherwise, it considers the following three cases: $(i)$ if $p$ is contracted, it tries to expand in the direction given by {\sc RootDirection} $(p)$; $(ii)$ If $p$ is expanded and has a child $p'$, then
            $p$ executes a handover contraction with $p'$;
         $(iii)$ if $p$ is expanded and has no children,
        and no inactive neighbor, then $p$ contracts.
        \\
        \textbf{retired}: &
            $p$ performs no further action.
        \\
    \end{tabularx}
		
		\vspace{.05in}
		
	{\sc RootDirection} $(p)$: \\
    \vspace{-.2in}
   \begin{algorithmic}
				\State Let $i$ be the label of an edge connected to a retired particle.
    \While{edge $i$ points to a retired particle}
        \State $i \; \gets \;$ label of next edge in clockwise direction
     \EndWhile
    \State \textbf{return} $i$
		\end{algorithmic}
		
    \caption{Spanning Forest Algorithm for Shape Formation}
    \label{alg:spanningForestAlgorithm}
\end{algorithm*}

\subsection{Hexagonal Shape Formation}
\label{sec:HexagonFormation}
We now investigate the \emph{Hexagonal Shape Formation (HEX)} problem where the system of particles has to assume the shape of a hexagon (but for the outer layer, which may not be completely full) in $\Geqt$.
The hexagon will be constructed around the seed particle. Note that a hexagon in $\Geqt$ is actually a disk, since it can be defined by the set of all nodes of $\Geqt$ within a certain distance $r$ from a seed node.
 
We will organize the particles according to a spiral snake structure which will incrementally add new layers to the hexagon, scaling naturally with the number of particles in the system. 
 In order to characterize the snake formation for a given shape formation problem, one only needs to specify the direction in which the line of particles forming the snake should continue to grow, for each new particle added to the snake. Hence once a particle finds the next valid position on the snake, it will become retired and set the snake direction accordingly (by correctly setting the flag $p.snakedir$ on the respective edge). 
 Different rules for snake formation will realize different shapes. In particular,  Algorithm~\ref{alg:retiredConditionHexagon} specifies the rules for the spiral snake formation for HEX. 

Initially, the seed particle $p$ sets the flag $p.snakedir$ in the shared memory 
corresponding to one of its adjacent edges (e.g., the edge with label 0). 
 Any particle adjacent to a retired particle becomes a root following the spanning forest algorithm. 
 Each root $p$ moves in a clockwise fashion around the structure of retired particles 
 until it finds the next position to extend the hexagonal snake (i.e., a position connected to a retired particle via an edge flagged $p.snakedir$) and becomes retired, following Algorithm~\ref{alg:retiredConditionHexagon} 
(see Part (d) of Figure~\ref{fig:graph}). 

\begin{algorithm}[htb]
	\begin{algorithmic}
	  \If {$p$ is a contracted root}
    \If {$p$ has an adjacent edge $i$ to $p'$ with a flag $p'.snakedir$, where $p'$ is retired} \Comment{retired condition}
		\While {edge $i$ is connected to a retired particle}
			\State $i \; \gets \;$ label of next edge in clockwise direction	  
		\EndWhile
    \State $p$ sets the flag $p.snakedir$ for edge $i$  
    \State $p$ becomes {\em retired}.
    \EndIf
    \EndIf
  \end{algorithmic}
	\caption{Snake Formation for HEX}
	\label{alg:retiredConditionHexagon}
\end{algorithm}

Figure~\ref{fig:hexagonsnapshots} depicts some snapshots of a run of HEX algorithm. See Appendix for the proof of the following theorem. 

	\vspace{-.1in}
\begin{theorem}
	\label{thm:hexagon}
	Our algorithm solves the HEX problem in worst-case optimal $O(n^2)$ work. 
\end{theorem}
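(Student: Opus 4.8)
The statement has three parts to establish, and I would treat them in turn: (a) \emph{correctness} --- every execution terminates with the particles occupying a hexagonal disk centered at the seed (up to an incomplete outer layer); (b) the $O(n^2)$ \emph{upper bound} on the total number of movements; and (c) a matching $\Omega(n^2)$ \emph{lower bound} that holds for \emph{any} algorithm solving HEX, which gives worst-case optimality.

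For correctness, the core is a structural invariant: at every point of the execution the retired particles occupy exactly a prefix $\{v_0,\dots,v_{|R|-1}\}$ of the infinite spiral $\sigma=(v_0,v_1,v_2,\dots)$ on $\Geqt$ defined as follows --- $v_0$ is the seed node; the snake leaves $v_0$ along the label-$0$ edge; and given $v_{i-1},v_i$, the successor $v_{i+1}$ is obtained by taking the edge of $v_i$ back to $v_{i-1}$ (the one carrying the flag $v_{i-1}.snakedir$) and rotating it clockwise until it points to a not-yet-retired node, exactly as in Algorithm~\ref{alg:retiredConditionHexagon}. A finite geometric case analysis on $\Geqt$ shows that $\sigma$ enumerates the nodes in order of non-decreasing distance from $v_0$, completing each hexagonal layer before starting the next; hence once $|R|=n$ the particles occupy $\{v_0,\dots,v_{n-1}\}$, a configuration in $\mathcal{G}$. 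A second invariant, inherited from the Spanning Forest primitive as used in~\cite{arxivDNA}, is that the non-retired particles always form a spanning forest whose roots are precisely those adjacent to $R$, that handovers keep the configuration connected, and that {\sc RootDirection} keeps each root hugging the boundary of $R$ in clockwise order. Finally, for liveness I would argue no deadlock: $\sigma$ always has a unique open cell $v_{|R|}$, which lies on the boundary of $R$, so the root that is closest to $v_{|R|}$ in clockwise boundary order is never permanently blocked and eventually reaches $v_{|R|}$ and retires; thus $|R|$ strictly increases until it equals $n$ and the algorithm terminates in a goal configuration.

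For the $O(n^2)$ upper bound, first observe that under the Spanning Forest and Snake Formation rules every particle simply alternates between contracted and expanded, and each expand/contract cycle advances its head by one node, so its number of movements is $O(\text{total distance traveled by its head})$; it therefore suffices to bound the sum of these distances by $O(n^2)$. For this I would use a potential argument: assign to each non-retired particle $p$ a ``remaining-work'' value equal to the length of a shortest path (through non-retired particles) from $p$ to $R$ plus the clockwise boundary distance from where that path meets $\partial R$ to the open cell $v_{|R|}$, and let $\Phi$ be the sum of these values. Then (i) $\Phi=O(n^2)$ initially, since at most $n$ particles each contribute $O(n)$ (the structure has $O(n)$ particles and $R$ always has perimeter $O(n)$); (ii) every particle movement decreases $\Phi$ by $\Omega(1)$ --- a root's clockwise distance to $v_{|R|}$ drops by one, and a follower handover moves a particle one step closer to $R$; and (iii) the only events that can increase $\Phi$ are the $O(n)$ retirements, each of which deletes a particle of $O(1)$ remaining work, re-anoints one follower as a root, and shifts $v_{|R|}$ by one node, changing every other particle's clockwise distance by $O(1)$, hence raising $\Phi$ by $O(n)$. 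Combining, the total number of movements is $O(n^2)$. I expect this to be the main obstacle: making (ii)--(iii) rigorous in the fully asynchronous model requires a robust definition of clockwise boundary distance as $R$ changes shape, ruling out a root circling the boundary more than $O(1)$ times, and handling interleavings with partially completed handover chains; the geometric lemma that Algorithm~\ref{alg:retiredConditionHexagon} traces a layer-by-layer hexagonal spiral likewise needs care, though it is a finite check.

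For the matching lower bound, let $\mathcal{I}$ contain the configuration in which the $n$ contracted particles form a straight line in $\Geqt$ with the seed at one endpoint. In any goal configuration the particles occupy a (nearly complete) hexagonal disk; since it must hold $n$ cells its radius is $\Theta(\sqrt n)$, and since the seed --- which never moves, being permanently retired --- is one of its cells, every particle ends within distance $O(\sqrt n)$ of the seed node. Let $\Psi$ be the sum over all particles of the $\Geqt$-distance from the node the particle occupies (its head, if expanded) to the seed node; then $\Psi=\Theta(n^2)$ initially and $\Psi=O(n^{3/2})$ in any goal configuration, while a single expansion, contraction, or handover moves one particle's head by at most one node and so changes $\Psi$ by $O(1)$. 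Hence any execution performs at least $\Psi_{\text{init}}-\Psi_{\text{final}}=\Omega(n^2)$ movements, and this argument uses only that the final shape is a hexagon containing the fixed seed, so it applies to \emph{every} algorithm solving HEX; therefore $O(n^2)$ work is worst-case optimal.
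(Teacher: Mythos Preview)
Your correctness argument and your lower bound are essentially the paper's: the paper also proves termination by arguing that a root that never retires would induce an infinite chain of mutually blocking roots on a finite boundary (your ``closest root in clockwise order'' phrasing is the same idea), then shows by induction on $|R|$ that the retired set is always a prefix of the hexagonal spiral; and for the lower bound the paper also takes the straight line with the seed at one endpoint and sums distances, which is exactly your potential $\Psi$ rephrased.

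Where you diverge is the $O(n^2)$ upper bound. You set up a global potential $\Phi$ and try to show every movement drops it by $\Omega(1)$ while retirements raise it by $O(n)$. The paper instead gives a direct \emph{per-particle} $O(n)$ bound, which is considerably simpler and avoids exactly the obstacles you flag: while a particle is a follower it moves along its tree path in $A(c)$ toward its root, and that path has length at most $2n$; once it becomes a root it walks clockwise along $\partial R$, and since any fixed (root, retired particle, edge) incidence can occur at most a bounded number of times and $|R|\le n$, the root phase also costs $O(n)$. Summing over $n$ particles gives $O(n^2)$. Your potential route is not wrong in spirit, but claim~(ii) is fragile as stated: a follower handover decreases the \emph{tree-path} distance to the root by one, not necessarily the ``shortest path through non-retired particles to $R$'' you chose for $\Phi$ (the two can differ), and contractions that are not part of a handover do not move the head at all, so they do not decrease $\Phi$; you would need to redefine $\Phi$ using the tree metric and count expand/contract pairs rather than individual movements. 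The paper's per-particle argument sidesteps all of this and also makes the ``root cannot circle more than $O(1)$ times'' worry disappear, since the bound is on edge incidences rather than laps.
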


\subsection{Triangular Shape Formation}
\label{sec:TriangleFormation}
Now we investigate the \emph{Triangular Shape Formation} problem (TRI) where the system of particles has to assume a final triangular shape on $\Geqt$ (but for possibly the outer layer). 

As we discussed for the HEX problem, in order to solve the TRI problem in our Spanning Forest + Snake Formation algorithmic framework, one only needs to setup the correct rules for growing a "triangular snake", which will be accomplished by Algorithm~\ref{alg:retiredConditionTriangle}. The snake formation rules for the TRI problem are complex than the ones we had for the HEX problem, since we will need to explicitly take into account the formation of different layers of particles as we build the triangular structure (this was implicitly taken care by the spiral formation in the HEX algorithm). The TRI snake construction will start from the seed particle $p$, which will occupy one of the triangle corners. The seed will mark two of its adjacent edges as the direction along which two of the borders of the triangle will be formed, 
 by setting $p.border[left]$ and $p.border[right]$ flags on the corresponding edges (we arbitrarily pick the edges will labels $0$ and $1$ out of $p$ in our algorithm). These directions will be propagated by the particles that end up on one of the two sides. 
The seed starts the snake formation by setting the flag $p.snakedir$ on its 0-labeled edge.
From there on, Algorithm~\ref{alg:retiredConditionTriangle} will build the triangle snake layer by layer,
alternating 
going "to the left" and "to the right". Every time the snake touches one of the borders (Case 2 of Algorithm~\ref{alg:retiredConditionTriangle}), it  sets up the rules for starting a new layer by setting the snake direction flags accordingly, first on the last particle of a layer (the one that just touched the border, in Case 2) and then on the first particle of the newly formed layer (Case 3). If a new layer is not needed, the snake proceeds to fill additional positions on the current layer (Case 1).
 Figure~\ref{fig:trianglesnapshots} illustrates this approach through some snapshots of the execution of the TRI algorithm. The proof of the following theorem appears in the Appendix.  

\begin{algorithm}[htb]
	\begin{algorithmic}
	\If {$p$ is a contracted root}
  \If {$p$ has an adjacent edge $i$ to $p'$ with a flag $p'.snakedir$, \mbox{\ \ \  \ \ \  \ \ \ \ }where $p'$ is retired} \Comment{\textbf{retired condition}}
	\State $bordertype=$ {\sc Border} $(p)$
	\If {$bordertype =$ null } 
	\State {\small \Comment{\textbf{\textbf{Case 1}: continue on the same layer\mbox{\ \ \  \  \ \ \ \ }}}}
		\State $p$ sets $p.snakedir$ for edge opposite to $i$  \mbox{\ \ \  \ \ \  \ \ \ \ \ \ }(i.e., edge $(i+3) \mod 6$)  
	\Else
			\State Let $q$ be the border particle connected to $p$
			\State Let $j$ be the edge of $p$ opposite to the edge \mbox{\ \ \  \ \ \  \ \ \ \ \ \ } connecting $p$ to $q$
			\State $p$ sets $p.border[bordertype]$ on edge $j$ 
			\If{$p' \not= q$} 
			\State {\small \Comment{\textbf{\textbf{Case 2}: start a new layer\mbox{\ \ \  \ \ \  \ \ \ \ \  \ \ \ \ \  \ \ }}}}
						\State $p$ sets $p.snakedir$  for  edge $j$

                           \Else 
													\State {\small \Comment{\textbf{\textbf{Case 3}: snake direction from border\mbox{\ \ \ }}}}
                              \If {$bordertype= left$} 
								\State $p$ sets $p.snakedir$  for  edge $(i+5)\mod 6$  
						\Else  
								\State $p$ sets $p.snakedir$  for edge  $(i+1)\mod 6$ 
						\EndIf
												  \EndIf
		\EndIf
		\State $p$ becomes retired
		\EndIf
	\EndIf
\end{algorithmic}

\vspace{.05in}
		
	{\sc Border} $(p)$: \\
    \vspace{-.2in}
   \begin{algorithmic}
				\If {$p$ has an adjacent edge $k$ to a particle $q$ with a flag $q.border[bordertype]$, where $bordertype \in \{left, right\}$
				}
					\State \textbf{return} $bordertype$
				\Else
					\State \textbf{return} null
				\EndIf
				
		\end{algorithmic}
	\caption{Snake Formation for TRI}
	\label{alg:retiredConditionTriangle}
\end{algorithm}

\begin{figure*}
  \includegraphics[width = 0.24\textwidth]{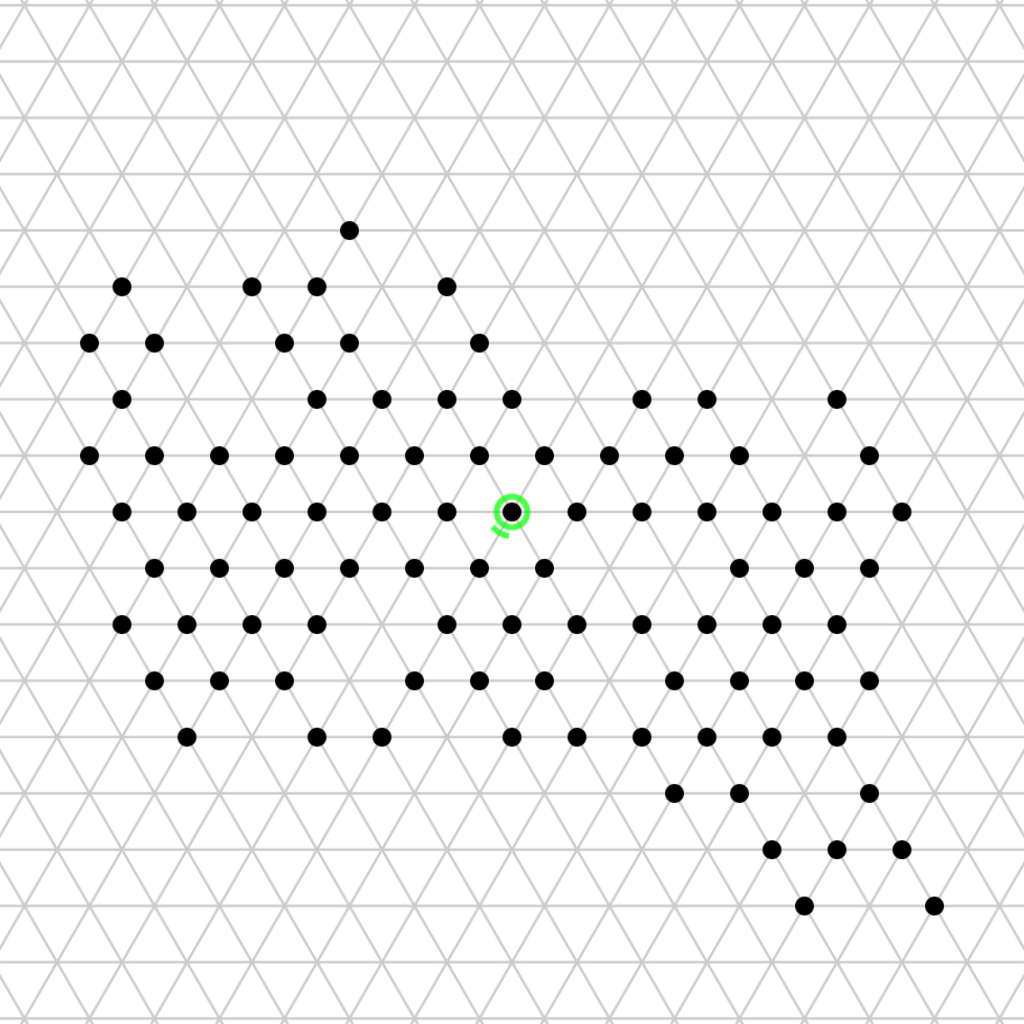}
	\includegraphics[width = 0.24\textwidth]{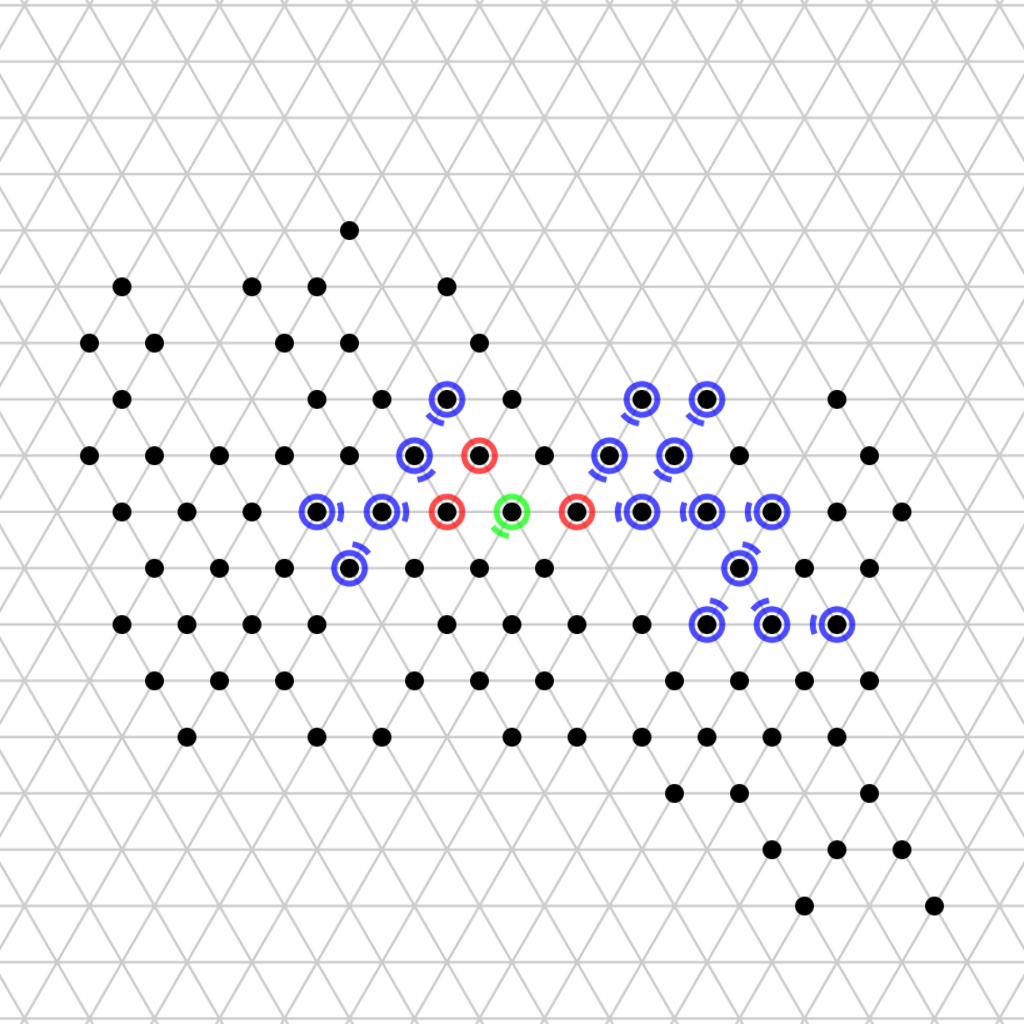}
  \includegraphics[width = 0.24\textwidth]{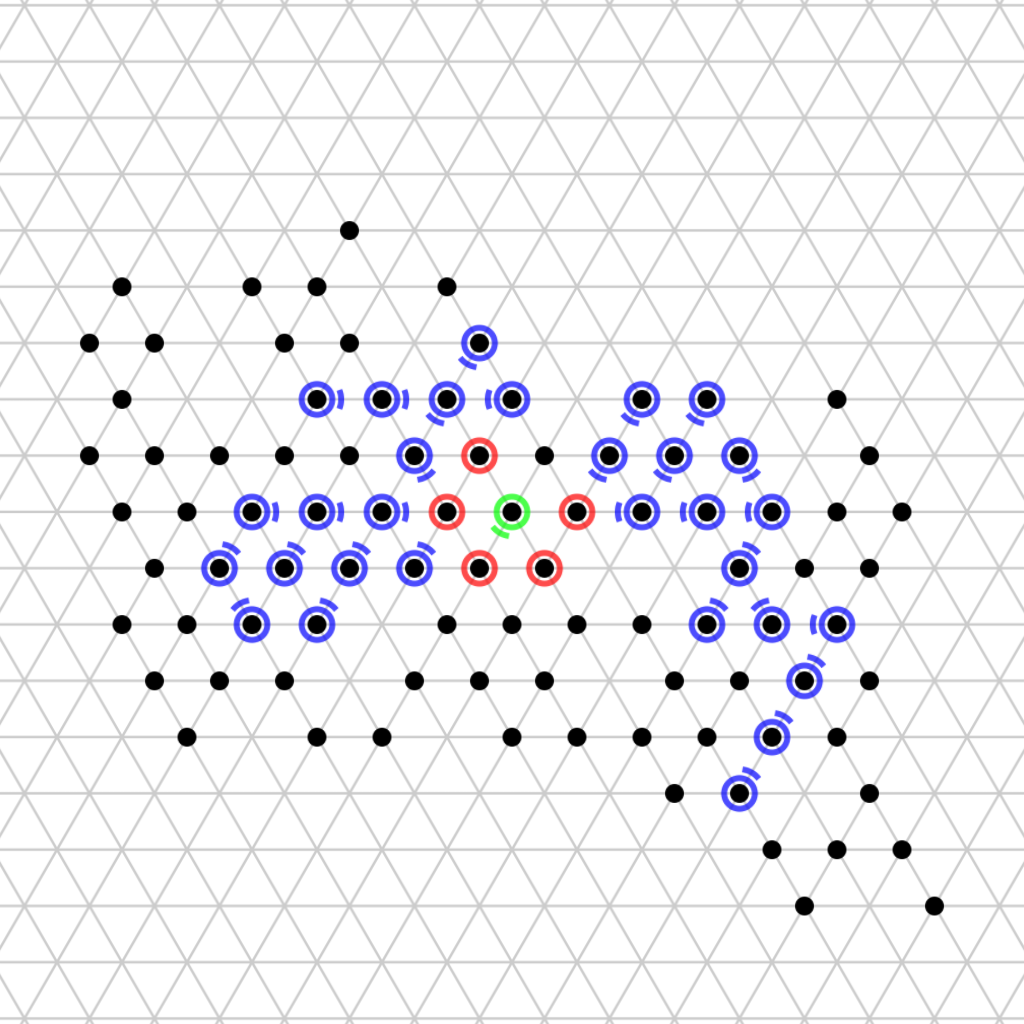}
	\includegraphics[width = 0.24\textwidth]{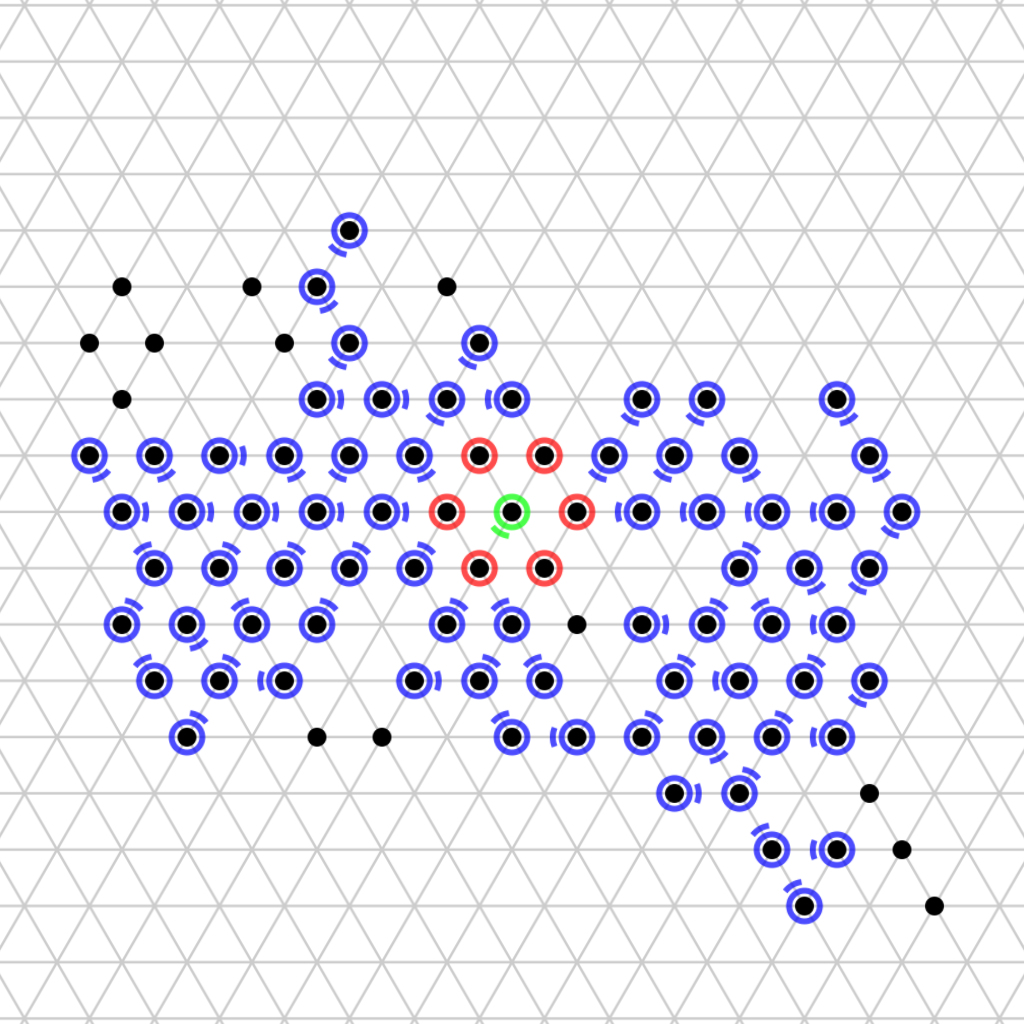}
	\\
	\\
	\includegraphics[width = 0.24\textwidth]{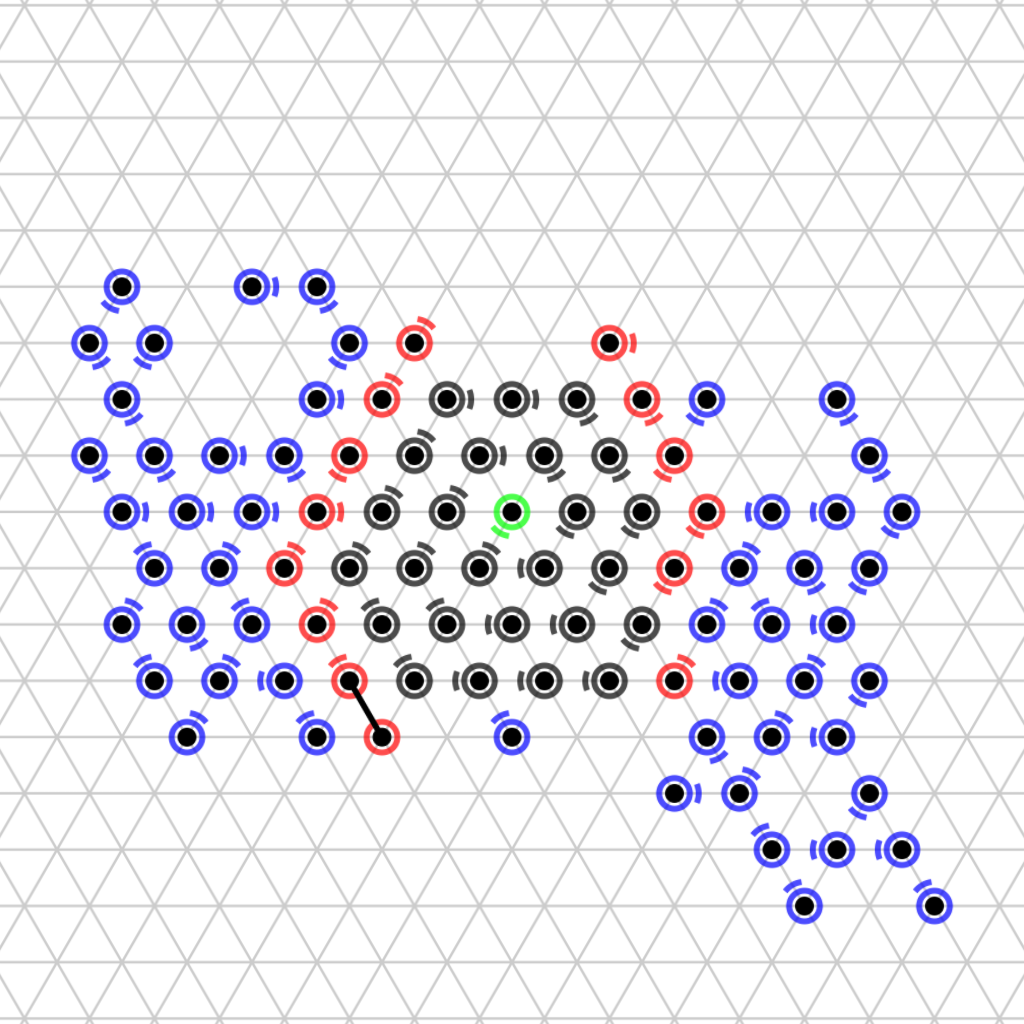}
  \includegraphics[width = 0.24\textwidth]{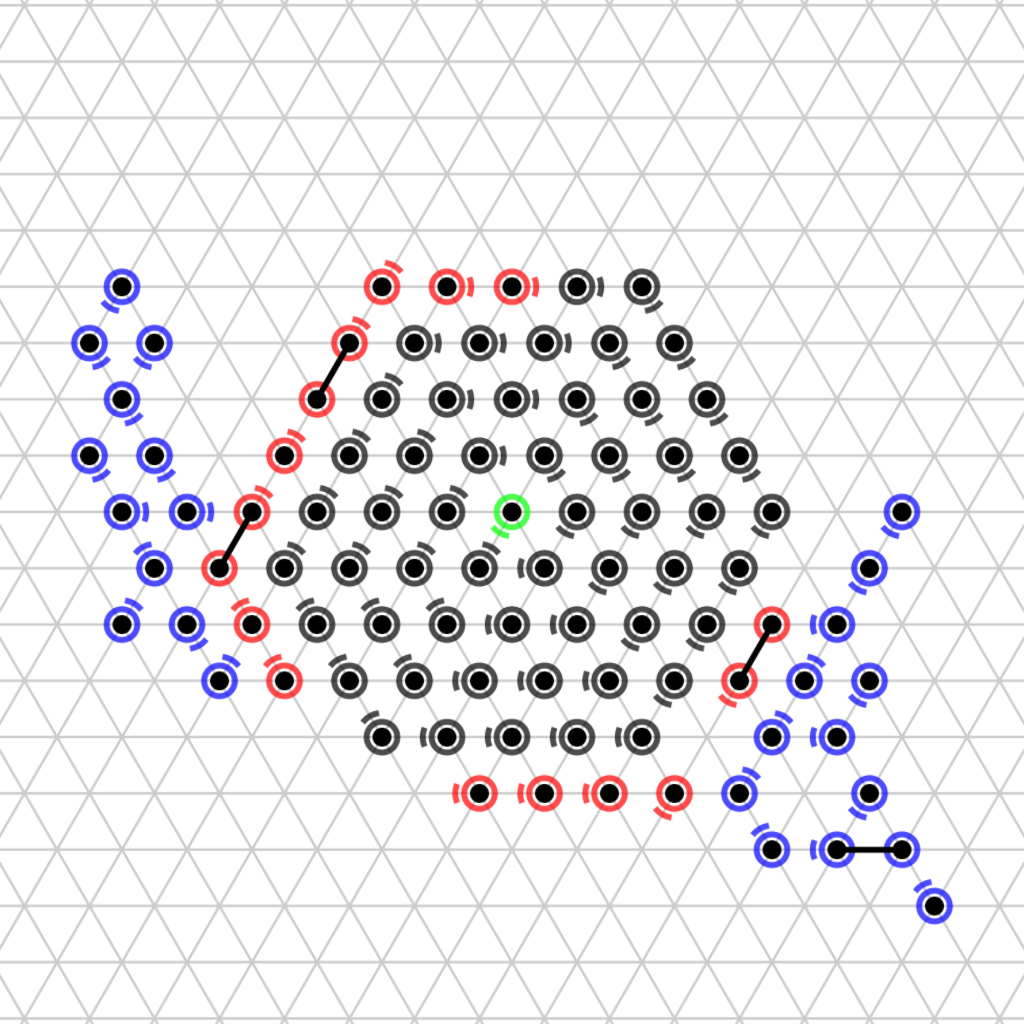}
  \includegraphics[width = 0.24\textwidth]{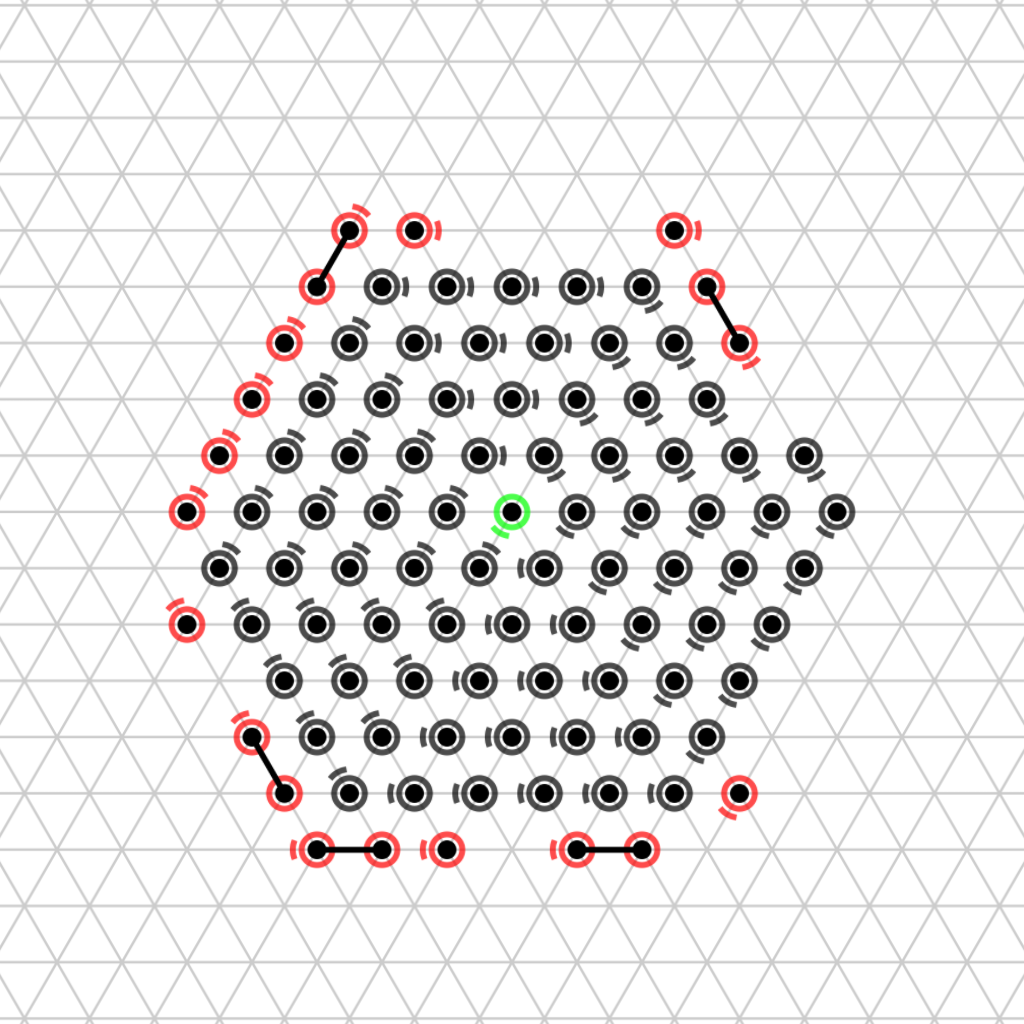}
  \includegraphics[width = 0.24\textwidth]{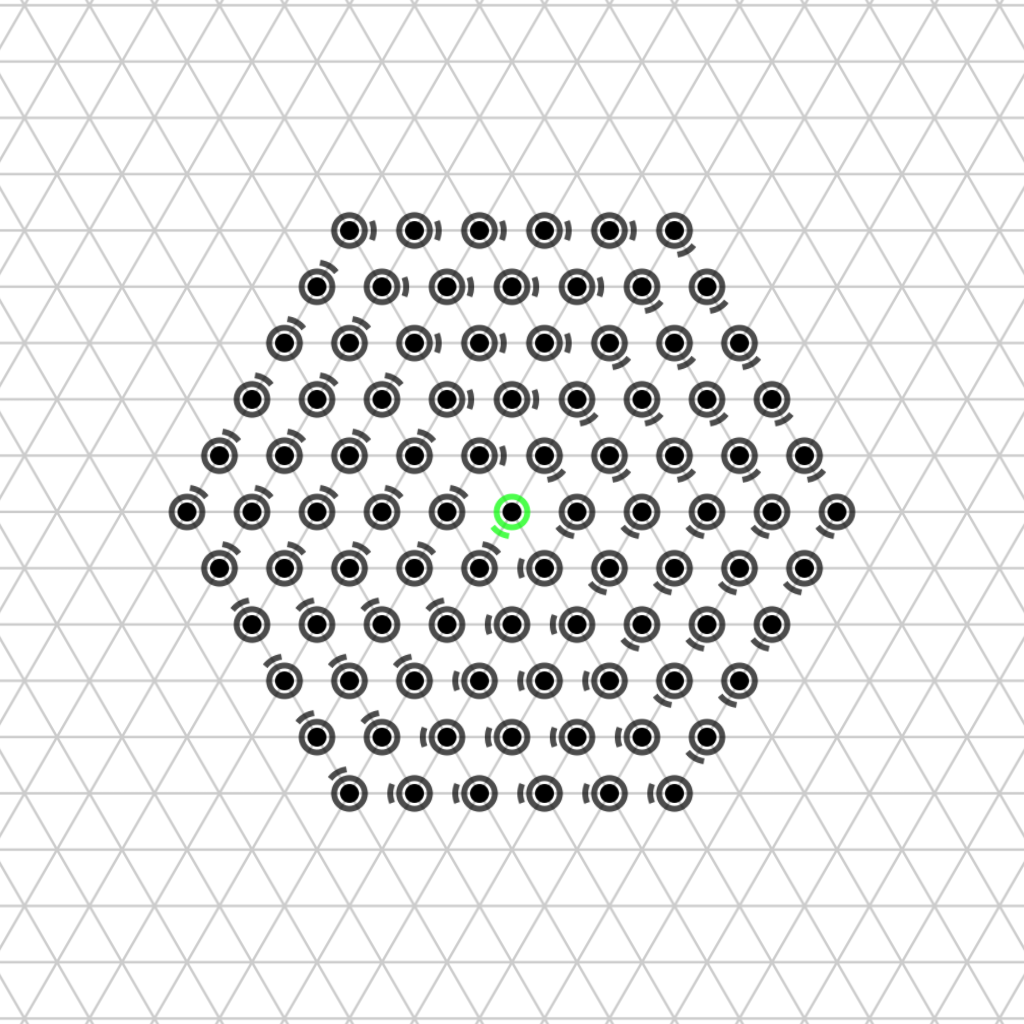}
  \caption{\small 
	Snapshots of the HEX algorithm. The seed is green, retired particles are black, roots are red and followers are blue. For a full simulation run of the algorithm see http://sops.cs.upb.de.
  }
  \label{fig:hexagonsnapshots}
\end{figure*}

\begin{figure*}
  \includegraphics[width = 0.24\textwidth]{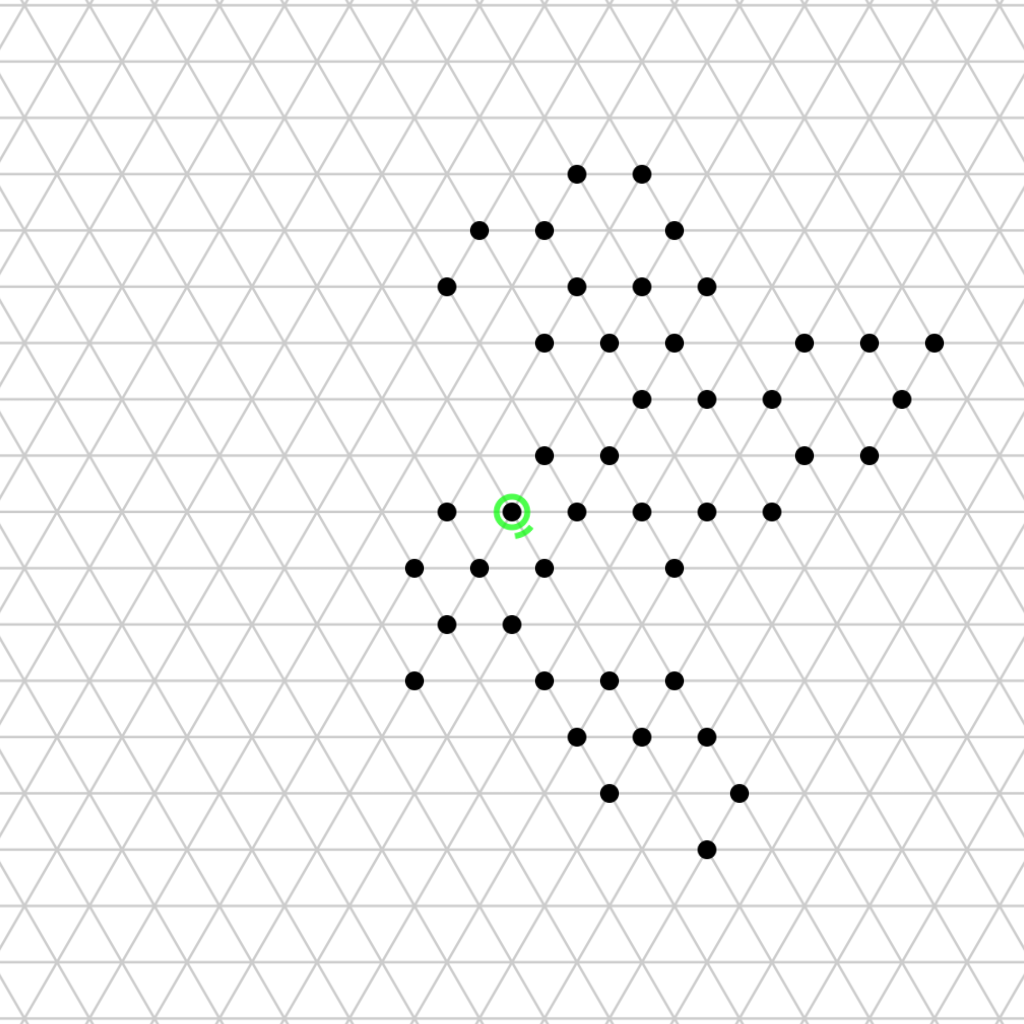}
	\includegraphics[width = 0.24\textwidth]{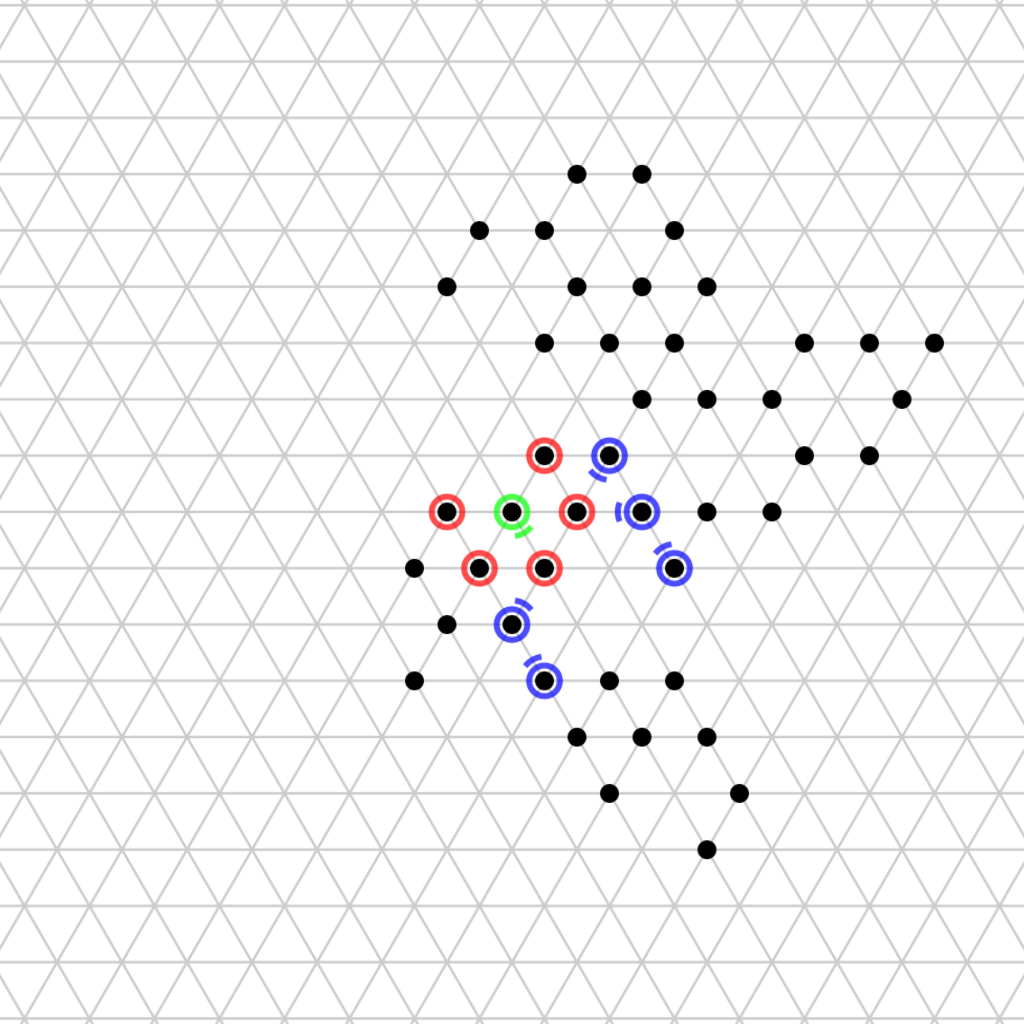}
	\includegraphics[width = 0.24\textwidth]{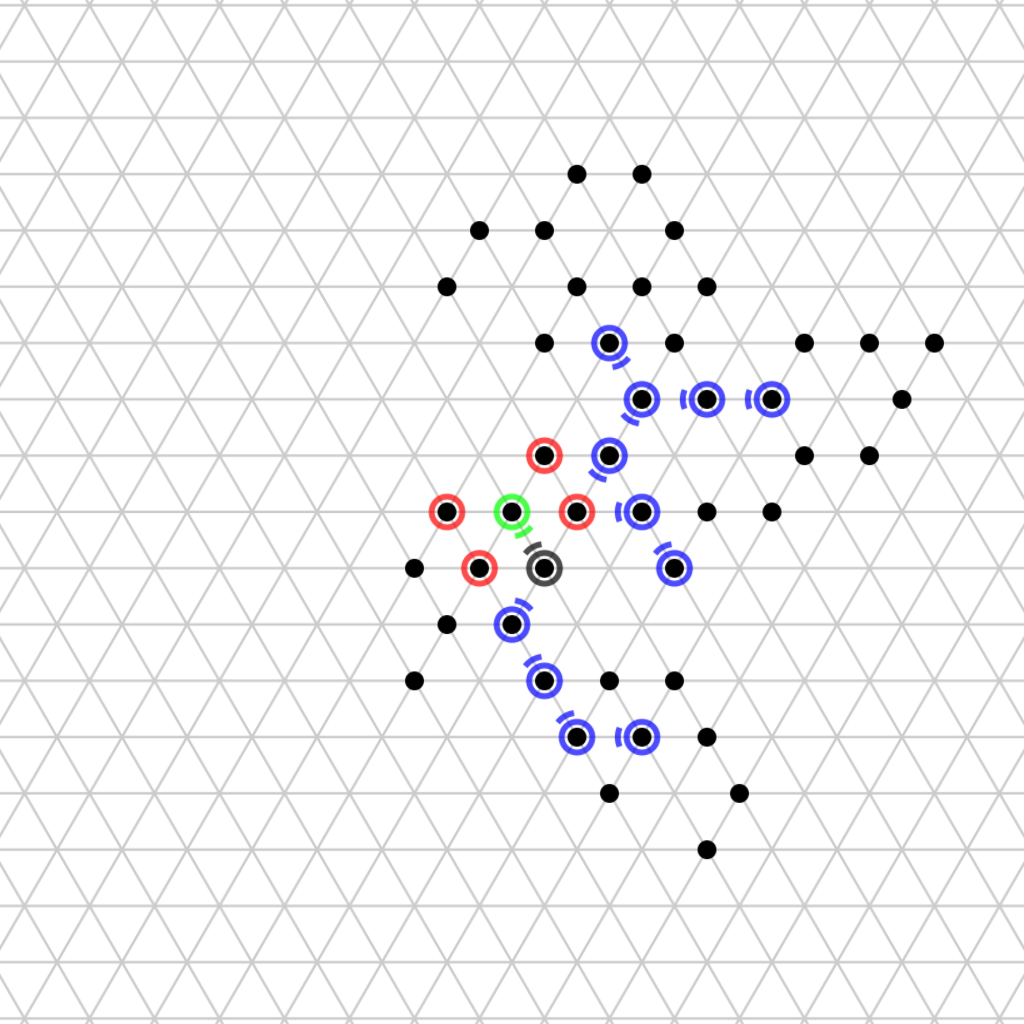}
  \includegraphics[width = 0.24\textwidth]{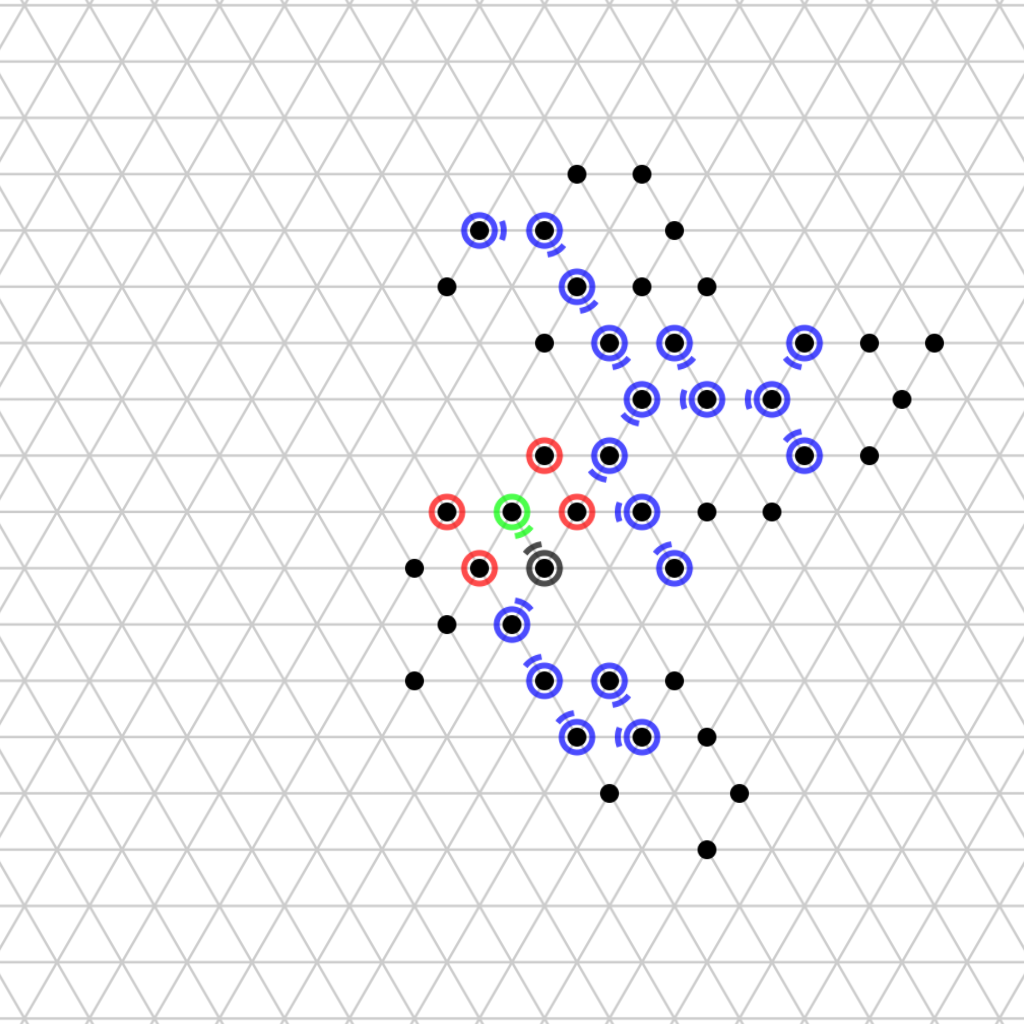}
  \\
	\\
	\includegraphics[width = 0.24\textwidth]{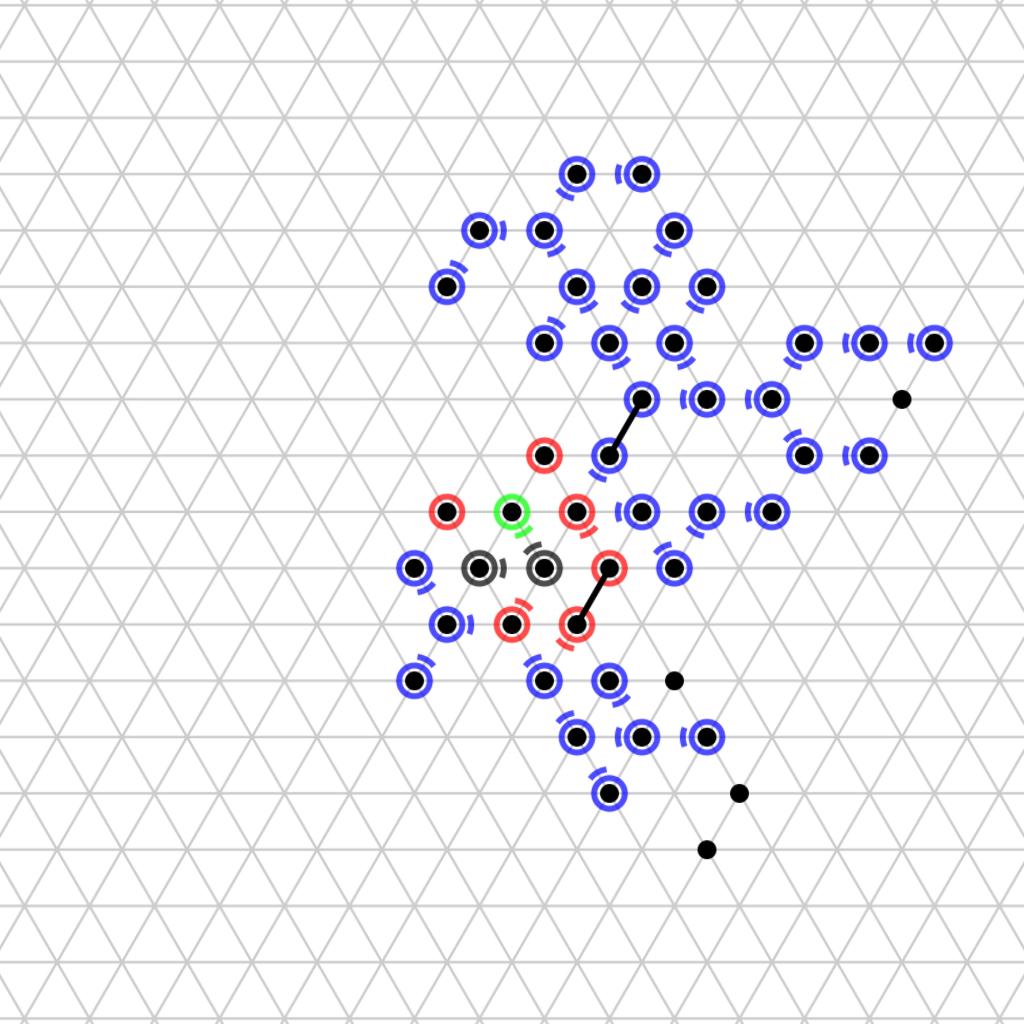}
  \includegraphics[width = 0.24\textwidth]{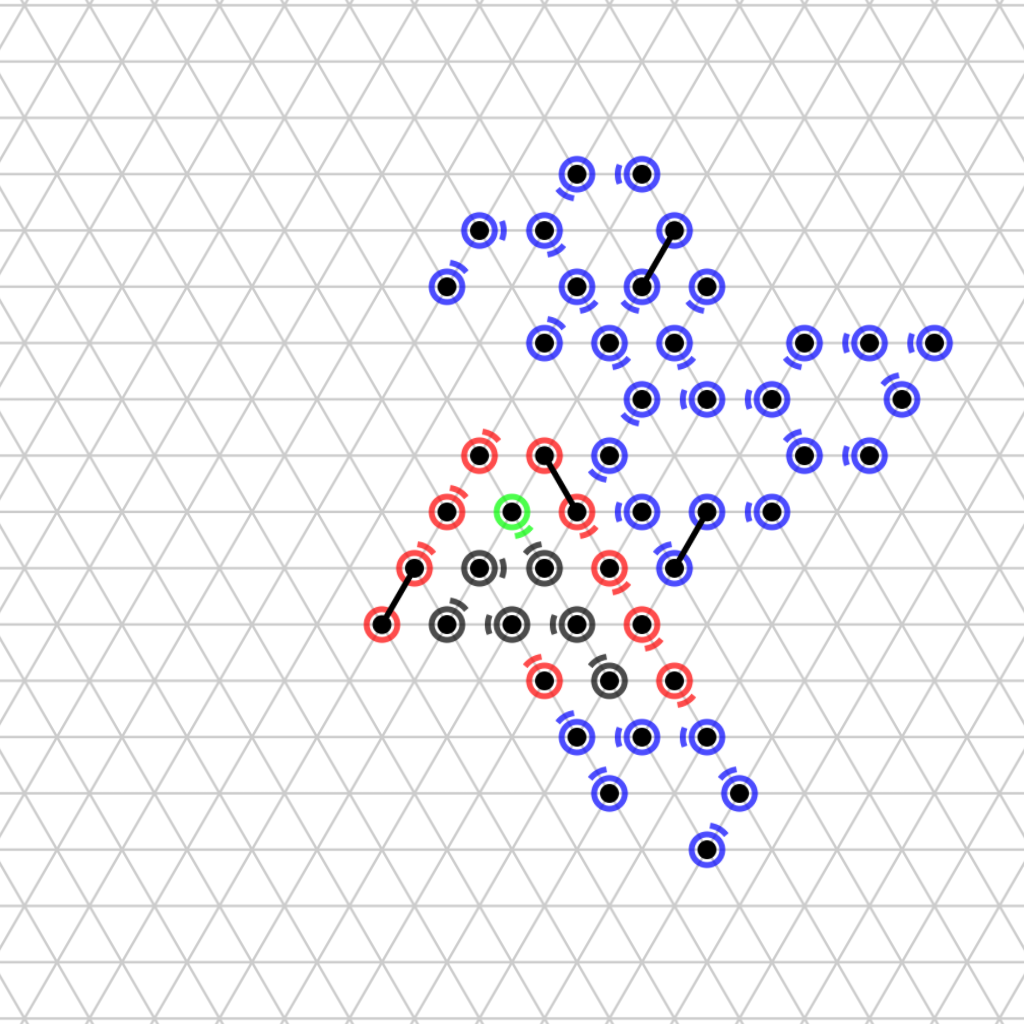}
  \includegraphics[width = 0.24\textwidth]{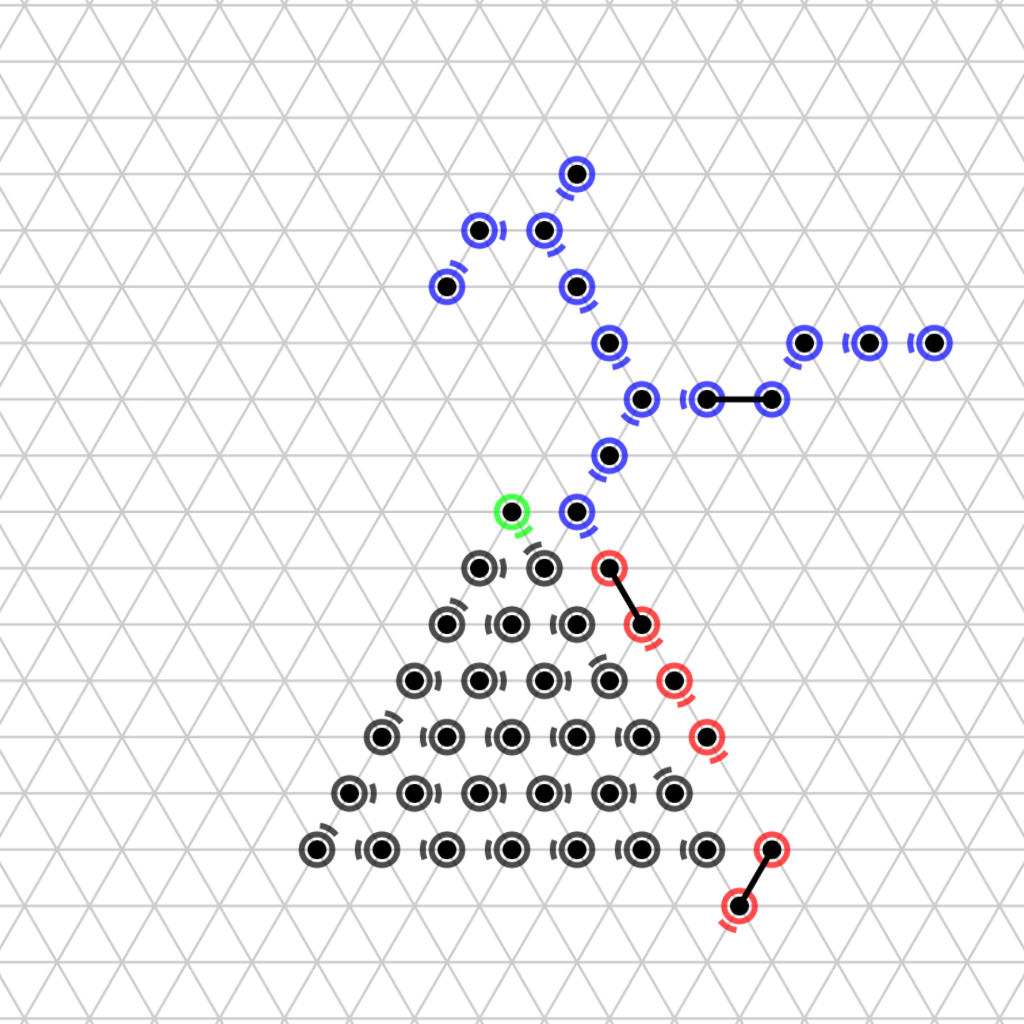}
  \includegraphics[width = 0.24\textwidth]{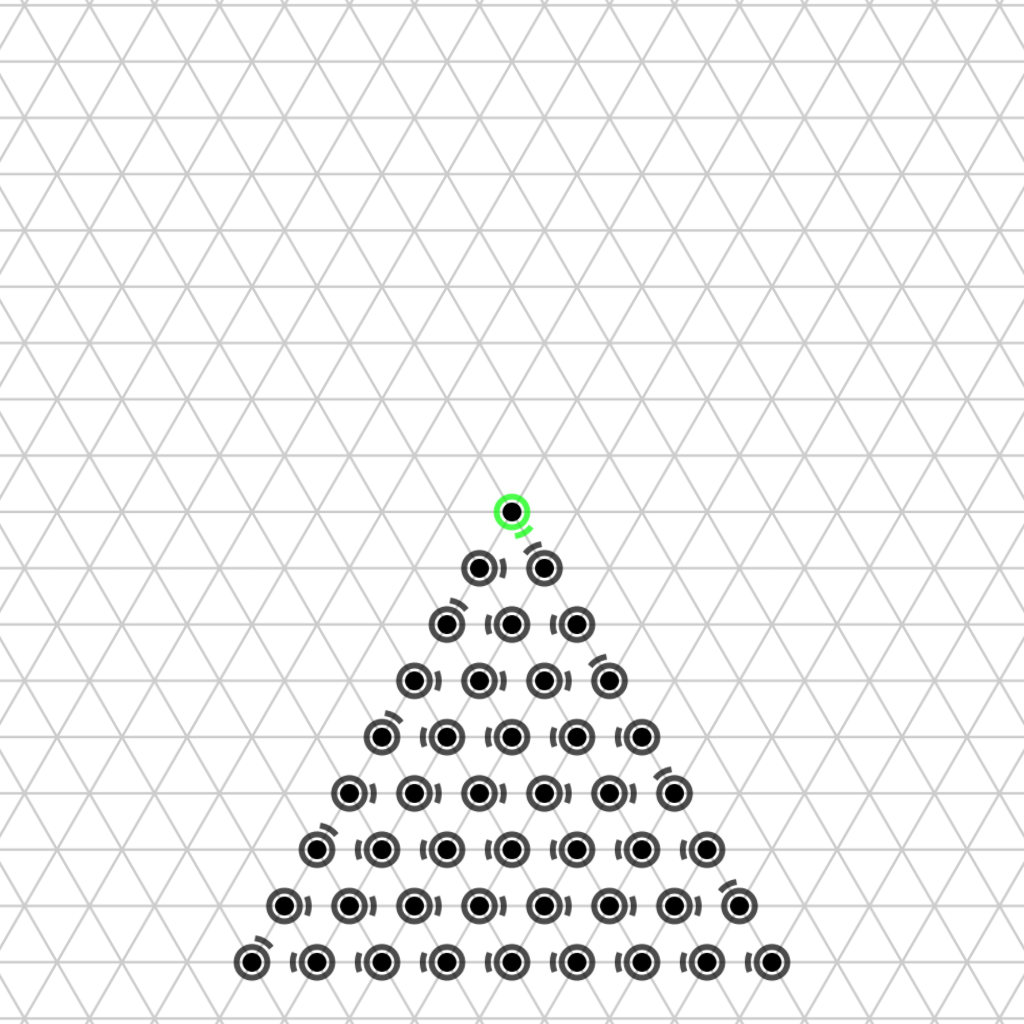}
  \caption{\small
	 Snapshots of the TRI algorithm. The seed is green, retired particles are black, roots are red and followers are blue. For a full simulation run of the algorithm see http://sops.cs.upb.de.
  }
  \label{fig:trianglesnapshots}
\end{figure*}

	\vspace{-.1in}
\begin{theorem}
	\label{thm:triangle}
	Our algorithm correctly solves the TRI problem in worst-case optimal $O(n^2)$ work.
\end{theorem}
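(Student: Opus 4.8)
The plan is to mirror the structure one would use for Theorem~\ref{thm:hexagon}: establish correctness (the system terminates in a valid triangular configuration), then establish the $O(n^2)$ work upper bound, and finally the matching $\Omega(n^2)$ lower bound. For correctness, I would first argue that the spanning forest invariant is maintained --- at every point in the execution the non-retired particles form a spanning forest of disjoint trees rooted at particles adjacent to the retired structure, and the retired structure is connected and contains the seed --- and that the system never deadlocks (there is always an applicable action while some particle is non-retired, using the handover rules exactly as in~\cite{arxivDNA}). The new content relative to HEX is purely in Algorithm~\ref{alg:retiredConditionTriangle}, so the heart of the correctness argument is a structural induction on the sequence of particles that become retired: I would show that after $k$ retirements, the retired set is exactly the first $k$ nodes of the ``triangular snake'' traced layer by layer, that the $p.border[left]$ and $p.border[right]$ flags are set precisely along the two chosen edge-directions out of the seed, and that the $snakedir$ flag at the current snake tip points to the unique next node the snake should occupy. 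The three cases of the algorithm then correspond to (1) extending the current layer away from the tip, (2) detecting that the tip has reached a border and closing off the layer, and (3) seeding the direction of the next layer from the first particle placed on it; checking that Cases~2 and~3 together turn the snake around correctly (and that the parity of ``left'' vs.\ ``right'' layers alternates as claimed) is the one genuinely fiddly geometric verification. I would also need to observe that a root particle, moving clockwise around the retired structure via {\sc RootDirection}, always eventually encounters an edge flagged $snakedir$ into a retired particle before it could escape the structure, which follows because the retired structure is connected and the snake tip is on its boundary.

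For the upper bound, I would reuse the accounting from the HEX analysis almost verbatim, since the spanning-forest movement mechanism is identical and only the stopping rule differs. The argument is that each particle travels distance $O(n)$ before retiring: the retired triangular structure always fits inside a ball of radius $O(\sqrt{n})$, hence of diameter $O(\sqrt n)$, but the relevant bound is that a root circling the structure and the followers trailing it each move $O(n)$ times before the root finds its snake position, because the number of non-retired particles is at most $n$ and the perimeter of the partial triangle is $O(\sqrt n)$, giving $O(n)$ movements amortized per particle and $O(n^2)$ in total. I would state this as a lemma paralleling the one behind Theorem~\ref{thm:hexagon} and then invoke it; the only thing to check is that the triangular snake, like the spiral, has the property that consecutive retired positions are adjacent and that the structure stays ``compact'' (contained in a triangle of side $\Theta(\sqrt n)$), so that a root never has to travel more than $O(n)$ steps. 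The lower bound $\Omega(n^2)$ is an adversarial configuration argument: start the $n$ particles in a straight line far from where the seed's triangle must be built, so that a constant fraction of the particles must each move $\Omega(n)$ steps to reach the final triangular region regardless of the algorithm; this is a global distance argument independent of our particular algorithm and is essentially the same construction used for HEX.

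The main obstacle I anticipate is the inductive structural claim for Case~2/Case~3 of Algorithm~\ref{alg:retiredConditionTriangle}: one must verify that when the snake tip reaches a border, the edge $j$ opposite the border-connecting edge, together with the subsequent $(i+5)\bmod 6$ or $(i+1)\bmod 6$ choice, really does place the next layer flush against the previous one and pointing back across the triangle, and that the $border$ flags propagate correctly along both sides so that {\sc Border}$(p)$ returns the right value for every particle that lands on a side. This is a finite case analysis on the local geometry of $\Geqt$ (six edge directions, two border orientations, the two parities of layer), but it is the step where an off-by-one in the modular arithmetic would break the whole construction, so I would carry it out carefully with a figure (Figure~\ref{fig:trianglesnapshots} already illustrates the intended behavior). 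Everything else --- connectivity of the retired set, progress/no-deadlock, and the $O(n^2)$ movement bound --- transfers from the HEX analysis with only cosmetic changes.
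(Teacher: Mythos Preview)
Your proposal is correct and follows essentially the same approach as the paper: correctness by induction on the number of retired particles with a case split over Algorithm~\ref{alg:retiredConditionTriangle}, and the work bounds carried over verbatim from the HEX analysis (each particle moves $O(n)$ times as follower plus $O(n)$ times as root; the $\Omega(n^2)$ lower bound comes from an initial line with the seed at one end). The paper's write-up is in fact terser than yours---it dispatches the Case~2/Case~3 geometry in a couple of sentences and literally says ``same argument'' for the upper bound---so your more explicit treatment of the layer-turn verification and border-flag propagation would only strengthen the exposition.
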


	\vspace{-.1in}
\section{Conclusion}
We presented a general algorithmic framework for shape formation problems in SOPS  that combines our spanning forest algorithmic primitive with a snake formation primitive. 
We have shown that by carefully determining how to grow the appropriate snake structure, we were able to solve the HEX and TRI problems.
We can easily to extend our snake primitive to build other shapes, such as square or rectangular shapes, or diamonds. It would be interesting to characterize all the general shapes that could be solved with our approach on $\Geqt$ (and possibly also for other infinite regular grid graphs, namely the square grid graph and the hexagonal grid graph, if we considered those as the underlying graph $G$ in the geometric amoebot model). Finally, we would like to evaluate the performance of our algorithms in terms of the worst-case number of asynchronous rounds necessary for termination.

{\small\bibliographystyle{abbrv}
\bibliography{literature}}  

\newpage

\appendix
\section{Analysis}
\subsection{Analysis of the Spanning Forest Algorithm }
\label{sec:analysisSpanningTree}

We say followers and roots 
 particles are \emph{active}.
As specified in Algorithm~\ref{alg:spanningForestAlgorithm}, only followers can set the flag $p.parent$.

The first three lemmas demonstrate some properties that hold during the execution of the spanning forest procedure
 and will be used later to analyze the proposed algorithms for HEX and TRI problems. 

\begin{lemma}
    \label{lem:successor}
	For a follower $p$, the node  indicated by $p.parent$ is occupied by an active particle.
\end{lemma}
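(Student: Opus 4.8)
The plan is to prove Lemma~\ref{lem:successor} by an invariant argument over the sequence of atomic actions. I would track the claim ``for every follower $p$, the node indicated by $p.parent$ is occupied by an active particle (a follower or a root)'' and show it holds initially (vacuously, since all non-seed particles start inactive and hence there are no followers) and is preserved by every atomic action that could possibly affect it. The actions I need to examine are exactly those that (a) create a new follower and set its $p.parent$ flag, (b) change an existing follower's $p.parent$ flag, (c) change the state of a particle that is pointed to by some follower's $p.parent$, or (d) move a particle so that the node ``indicated by $p.parent$'' changes occupancy.

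First I would handle follower creation. A particle $p$ becomes a follower only in the \textbf{inactive} case of Algorithm~\ref{alg:spanningForestAlgorithm}, and only when an adjacent particle $p'$ is a root or a follower; it sets $p.parent$ to the edge toward $p'$. So at the moment of creation the invariant holds, since $p'$ is active by the case condition. Next, the \textbf{follower} case, subcase $(i)$: when $p$ is contracted and its parent $p'$ is expanded, $p$ expands toward $p'$ in a handover and adjusts $p.parent$ to still point at $p'$. Here I must argue that after the handover $p'$ is still active: a handover does not change the state of either participant (state changes to \emph{retired} happen only via the snake-formation algorithms, which act on contracted roots, not as part of a handover), and $p'$ remains a follower or root, so the invariant is maintained with the updated flag pointing at the (moved but still active) $p'$. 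The same reasoning covers subcase $(ii)$ (a handover with a contracted child): this does not touch $p.parent$ and does not change the state of $p$'s parent, so nothing to check beyond confirming the parent node's occupant is unchanged. Subcase $(iii)$, a plain contraction of $p$ when it has no children and no inactive neighbor, moves $p$ itself but not the node pointed to by $p.parent$, and leaves that node's occupant untouched.

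The subtle points — and where I expect the main obstacle — are the transitions that change a parent particle out from under its children. When $p'$ becomes a root (from inactive or from follower), or when a root $p'$ becomes retired, any follower $q$ with $q.parent$ pointing at $p'$ could be endangered. For a follower becoming a root the invariant survives because a root is still active. The genuinely dangerous case is a root $p'$ becoming retired: I need to show that at that moment no follower points at $p'$. This should follow from the structure of the protocol: a root retires only when it is contracted and satisfies the retired condition, and I would argue that a root never has a follower child — when a particle becomes a root it is connected to a retired particle, and the parent-pointer discipline plus the ``becomes a root particle'' rule in the \textbf{follower} case (a contracted follower adjacent to a retired particle becomes a root, abandoning its parent status only in the sense that... ) means children of a root get promoted to roots themselves before the parent retires. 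The cleanest way to discharge this is to invoke or establish the companion fact that the $p.parent$ pointers form a forest whose roots are exactly the root-state particles, and that a root particle has no active particle depending on it at the instant it retires; if the paper's later lemmas already give ``the parent structure is a forest rooted at roots'', I would cite that, otherwise I would prove the needed special case inline by observing that any child of $p'$ adjacent to the retired structure would itself have become a root first, and a root $p'$ about to retire is contracted with no children (it followed subcase $(iii)$/handovers to shed any expanded children first). Finally I would note movements of the parent particle: these are always handovers or contractions that, by the ``contract out of the tail, head stays occupied'' convention, keep the relevant node occupied by the same active particle, and the accompanying flag adjustment in subcase $(i)$ keeps $p.parent$ correctly aimed — so the invariant is preserved in all cases, completing the induction.
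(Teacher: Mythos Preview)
Your overall strategy---an invariant maintained across atomic actions---is the same as the paper's, and your treatment of follower creation and of handovers involving the follower itself is fine. The gap is in the retirement case, and it is a real one.

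You try to discharge the root-to-retired transition by arguing that a root has no follower children at the instant it retires: ``any child of $p'$ adjacent to the retired structure would itself have become a root first, and a root $p'$ about to retire is contracted with no children (it followed subcase $(iii)$/handovers to shed any expanded children first).'' Neither clause is justified by Algorithm~\ref{alg:spanningForestAlgorithm}. In the \textbf{root} case the snake-formation check comes \emph{before} subcases $(i)$--$(iii)$, so a contracted root that satisfies the retired condition retires immediately, regardless of whether it has follower children. And in the asynchronous model, a follower child of $p'$ need not have been activated between the moment $p'$ reached the retired structure and the moment $p'$ retires, so there is no guarantee the child has already been promoted to root. Your proposed inline argument therefore does not go through.

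The paper does not attempt to show that a retiring root is childless. Instead its case analysis is organized by which of $p$ and $p'$ the action touches; when only $p'$ acts, the paper lists the possible transitions (follower$\to$root, root$\to$retired, expansion, contraction) and simply asserts that none of them violates the lemma, the point being that the node indicated by $p.parent$ remains occupied throughout. You should follow that line rather than trying to rule out children at retirement. A second, smaller omission: your final paragraph on ``movements of the parent particle'' assumes the node indicated by $p.parent$ stays occupied by the \emph{same} particle after a contraction. The paper explicitly treats the subcase where $p.parent$ points to the tail of $p'$ and $p'$ contracts via a handover with a third active particle $p''$; after that action $p.parent$ points to $p''$, not $p'$. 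Your argument needs this case as well.
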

\begin{proof}
    Consider a follower $p$ in any configuration during the execution of Algorithm~\ref{alg:spanningForestAlgorithm}.
    Note that $p$ can only become a  follower from an inactive state,
    and once it leaves the follower state it will not switch to that state again.
    Consider the first configuration $c_1$ in which $p$ is a follower.
    In the configuration $c_0$ immediately before $c_1$, $p$ must be inactive
    and it becomes a follower because of an active particle $p'$ occupying the position indicated by $p.parent$ in $c_0$.
    The particle $p'$ is still adjacent to the edge flagged by $p.parent$ in $c_1$.
    Now assume that $p.parent$ points to an active particle $p'$ in a configuration $c_i$,
    and that $p$ is still a follower in the next configuration $c_{i+1}$ that results from executing an action $a$.
    If $a$ affects $p$ and $p'$, the action must be a handover in which $p$ updates its flag $p.parent$
    such that $p.parent$ may be moved to the edge that now connects $p$ to $p'$ in $c_{i+1}$.
    If $a$ affects $p$ but not $p'$,
    it must be a contraction in which $p.parent$ does not change and still points to $p'$.
	If $a$ affects $p'$ but not $p$, there are multiple possibilities.
 	The particle $p'$ might switch from follower to root state,
 	 or from root to retired state,
	 	or it might expand,
 	none of which violates the lemma.
 	Furthermore, $p'$ might contract.
 	If $p.parent$ points to the head of $p'$, $p'$ is still adjacent to the edge flagged by $p.parent$ in $c_{i+1}$.
 	Otherwise, $p$ is a child adjacent to the tail of $p'$ in $c_i$ and therefore the contraction must be part of a handover. 
	As $p$ is not involved in the action, the handover must be between $p'$ and a third active particle $p''$.
 	It is easy to see that after such a handover $p.parent$ points to either $p'$ or $p''$.
	Finally, if $a$ affects neither $p$ nor $p'$, $p.parent$ will still point to $p'$ in $c_{i+1}$.
\end{proof}

Based on Lemma~\ref{lem:successor}, 
we define a directed graph $A(c)$ for a configuration $c$ as follows.
$A(c)$ contains the same nodes as the nodes occupied in $\Geqt$ by the set of particles in $c$.
For every expanded particle $p$ in $c$, $A(c)$ contains a directed edge from the tail to the head of $p$,
and for every follower $p'$ in $c$, $A(c)$ contains a directed edge from the head of $p'$ to $p'.parent$.

\begin{lemma}
	\label{lem:forest}
	The graph $A(c)$ is a forest, and if there is at least one active particle,
    every connected component of inactive particles contains a particle that is connected to an active particle.
\end{lemma}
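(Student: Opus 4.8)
\textbf{Proof proposal for Lemma~\ref{lem:forest}.}

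The plan is to prove the two assertions separately, first the acyclicity/forest structure of $A(c)$ and then the connectivity claim. For the forest part, I would argue that $A(c)$ has out-degree at most one at every node: by construction a node contributes an outgoing edge only if it is the tail of an expanded particle (one edge, to the head) or the head of a follower (one edge, to the node flagged by $p.parent$), and a single node cannot simultaneously be the tail of one particle and the head of a \emph{different} follower occupying it, since every node is occupied by at most one particle. The only subtlety is a follower $p$ that is \emph{expanded}: its tail points to its head, and its head points to $p.parent$; these are two edges but emanating from two distinct nodes, so out-degree one still holds. Since $A(c)$ is finite and every node has out-degree $\le 1$, the absence of a directed cycle suffices to conclude it is a forest (a functional graph with no cycle is a forest). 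To rule out cycles, I would track the state hierarchy: by Lemma~\ref{lem:successor}, following a $p.parent$ edge always lands on an active particle, and following an expansion edge stays within the same particle. A directed cycle would therefore have to consist entirely of nodes occupied by active (follower or root) particles, and in fact — since roots have no $p.parent$ flag and only contribute the internal tail-to-head edge — any node in the cycle that is a root must be immediately followed (in the cycle) by its own head, which is the same particle; so a root cannot close a cycle either, forcing the cycle to live among followers. Then I would use an invariant argument over the execution: in the initial configuration there are no active particles at all, hence no edges and no cycle; and each atomic action (a particle becoming a follower with a freshly set $p.parent$, a handover, or a contraction) cannot create a cycle, which I would check case by case much as in the proof of Lemma~\ref{lem:successor} — the newly inactivated-to-follower particle points to a pre-existing active particle, so its new edge is a "leaf attachment," and handovers/contractions only reroute an edge between the two (or three) particles involved without introducing a back-edge into the rest of the structure.

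For the connectivity claim, suppose there is at least one active particle and let $S$ be a connected component of inactive particles (connectivity here meaning adjacency in $\Geqt$). I want to show some particle of $S$ is adjacent to an active particle. The overall particle system is connected by hypothesis (all our configurations are connected), so there is a path in $\Geqt$ from any particle of $S$ to the active particle; walking along this path, there must be a first edge leaving the inactive region, i.e., an inactive particle $q$ adjacent to a non-inactive particle $q'$. It remains to argue $q' $ is \emph{active} rather than \emph{retired}, or, if $q'$ is retired, to follow up: but an inactive particle adjacent to a retired particle would, by the inactive rule of Algorithm~\ref{alg:spanningForestAlgorithm}, have become a root — contradicting that $q$ is inactive in $c$ (the inactive-to-root transition is forced, not optional, in the sense that the configuration $c$ under consideration is one in which the rules have "settled," or more carefully: we only need the statement to constrain configurations, and if $q$ is inactive and adjacent to a retired particle it is simply not a valid resting state). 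Actually the cleaner route is: take the shortest path from $S$ to any active particle; its first non-$S$ vertex $q'$ is adjacent to an inactive particle $q \in S$. If $q'$ is active we are done. If $q'$ is retired, then $q$ is inactive and adjacent to a retired particle, so by the inactive rule $q$ is (or is about to be) a root; I would invoke the same "first configuration" style reasoning as in Lemma~\ref{lem:successor} to say this situation cannot persist — formally, one shows by induction on the execution that no inactive particle is ever adjacent to a retired particle in a configuration reached by the algorithm, since the transition happens immediately. So on a reachable configuration $c$, $q'$ cannot be retired, hence $q'$ is active and $q \in S$ is the desired particle.

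The main obstacle I anticipate is the bookkeeping in the acyclicity argument: unlike Lemma~\ref{lem:successor}, which only tracks one flag, here I must maintain a global structural invariant (no directed cycle) across all atomic actions, and the handover cases — where an edge is rerouted among two or three particles and $p.parent$ flips to a possibly new edge — require care to confirm no action ever points an edge "backward" into an already-connected part of the forest. I expect to handle this by the same enumeration of action types as in the previous proof, relying on the fact that a newly created $p.parent$ edge always targets a particle that was \emph{already} active (Lemma~\ref{lem:successor}), so it attaches $p$ as a new leaf of an existing tree rather than merging or closing a loop; handovers merely slide an existing edge along one grid step and contractions delete the internal tail-to-head edge, neither of which can create a cycle. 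The connectivity half is comparatively routine given the standing assumption that particle configurations are connected.
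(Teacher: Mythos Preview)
Your forest argument is sound and is essentially the paper's approach dressed up a bit more structurally: the paper also proceeds by induction over the execution with a case analysis on atomic actions, and your out-degree-at-most-one observation is a clean way to reduce ``forest'' to ``no directed cycle,'' which the paper handles more implicitly. No real difference there.

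The connectivity argument, however, has a genuine gap. Your key step is the invariant ``no inactive particle is ever adjacent to a retired particle in a reachable configuration,'' which you justify by saying the inactive-to-root transition ``happens immediately.'' In the asynchronous model this is simply false: already in the initial configuration the seed is retired and all of its neighbors are inactive, and more generally an inactive particle can sit next to a retired particle for arbitrarily many atomic actions before it is scheduled. So when you walk a shortest path from the inactive component $S$ to an active particle and look at the first non-inactive vertex $q'$, you cannot rule out that $q'$ is retired, and your argument stalls. The paper avoids this by proving the connectivity half with the \emph{same} induction over the execution as the forest half: for each action type it checks that no inactive component loses its last adjacency to an active particle, using in particular that an active particle contracts in isolation only when it has no inactive neighbor, and that when an inactive particle activates, the inactive component it leaves behind is (by adjacency) connected to it. Your direct global-connectivity shortcut would need a substitute for the false invariant, and there is no obvious one; the inductive route is the one to take.
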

\begin{proof}
    In an initial configuration $c_0$, all particles are inactive and therefore the lemma holds trivially.
    Now assume that the lemma holds for a configuration $c_i$.
    We will show that it also holds for the next configuration $c_{i+1}$ that results from executing an action $a$.
    If $a$ affects an inactive particle $p$, this particle either becomes a follower or a root.
	In the former case $p$ joins an existing tree, and in the latter case $p$ forms a new tree in $A(c_{i+1})$.
    In either case, $A(c_{i+1})$ is a forest and the connected component of inactive particles
    that $p$ belongs to in $c_i$ is either non-existent or connected to $p$ in $c_{i+1}$.
	If $a$ affects only a single particle $p$ that is in state follower, this particle can contract or become a
	 root.
	In the former case, $p$ has no child 
	$p'$ such that $p'.parent$ 
	is the tail of $p$
	and also $p$ has no inactive neighbors.
	Therefore, the contraction of $p$ does not disconnect any follower or inactive particle
	and, accordingly, does not violate the lemma.
	In the latter case, $p$ becomes a root of a tree which also does not violates the lemma.
    If $a$ involves only a single particle $p$ that is in state root, $p$ can expand or contract 
		  or become retired.
    An expansion 
		 and becoming retired 
		 trivially cannot violate the lemma and the argument for the contraction is the same as for the contraction of a follower above. 
    Finally, if $a$ involves two active particles in $c_i$, these particles perform a handover.
    While such a handover can change the parent relation among the nodes, it cannot violate the lemma.
\end{proof}

The following lemma shows that the spanning forest always makes progress, by showing that as long as the roots keep moving, the remaining particles will eventually follow.

\begin{lemma}
	\label{lem:contract}
	An expanded particle eventually contracts.
\end{lemma}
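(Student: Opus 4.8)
The plan is to argue by contradiction: suppose some expanded particle never contracts during the execution. The key observation is that a particle can be expanded only while it is in the follower or root state (retired particles perform no action, and inactive particles are by definition contracted), so we may restrict attention to active particles. I would fix a configuration after which no contraction of the (supposedly) permanently expanded particles ever happens, and then consider the graph $A(c)$ from Lemma~\ref{lem:forest}. Since $A(c)$ is a forest, the directed edges it contains (tail-to-head for expanded particles, head-to-parent for followers) induce a partial order; in particular, among the permanently expanded particles there is one, call it $p$, that is ``closest to a root/retired particle'' along its tree path — formally, the expanded particle $p$ whose path in $A(c)$ to the root of its tree contains no other permanently expanded particle.

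Next I would examine the rules of Algorithm~\ref{alg:spanningForestAlgorithm} for such a $p$. If $p$ is an expanded root, then by case $(ii)$ it performs a handover contraction with any child it has, and by case $(iii)$ it contracts outright once it has no children and no inactive neighbor; so for $p$ to remain expanded forever, it must forever have either a child or an inactive neighbor. If $p$ is an expanded follower, the same two escape hatches (handover with a contracted child, or contraction when childless with no inactive neighbor) apply via cases $(ii)$ and $(iii)$. Therefore the obstruction must be either (a) $p$ perpetually has an inactive neighbor, or (b) $p$ perpetually has a child, and that child is never contracted long enough to trigger a handover. For (a), I would invoke the second part of Lemma~\ref{lem:forest}: every connected component of inactive particles touching an active particle will eventually see one of its particles become active (an inactive particle adjacent to an active one becomes a follower or root on its next activation, using fairness of the asynchronous round model), and this propagates, so $p$'s inactive neighbors cannot persist forever. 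For (b), I would push the argument down the tree: any child $q$ of $p$ is itself active; if $q$ is expanded it is a permanently-expanded particle strictly below $p$ in $A(c)$, contradicting the choice of $p$; if $q$ is contracted, then $q$ (a contracted follower or root with expanded parent $p$) will on its next activation perform the handover of case $(i)$, expanding into $p$'s tail and forcing $p$ to contract — contradiction. This exhausts the cases and forces $p$ to eventually contract.

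The main obstacle I anticipate is the bookkeeping needed to make ``eventually'' precise in the presence of expansions and new particles joining trees: the set of children of $p$ and the set of inactive neighbors of $p$ both change over time, so one cannot simply say ``wait until $p$ has no children.'' The clean way around this is to define a potential function on $p$'s subtree in $A(c)$ — e.g., the number of inactive particles in the component plus the number of expanded particles on root-paths through $p$ — and show that (i) it never increases once we are past the fixed configuration, and (ii) it strictly decreases within a bounded number of rounds as long as $p$ stays expanded, because by fairness each relevant neighbor/child gets activated and each such activation either activates an inactive particle, performs a handover that moves an expansion toward the root, or contracts $p$ itself. One must also double-check the edge case where $p$ is expanded with a child $q$ adjacent to $p$'s tail whose own subtree is entirely contracted: here case $(i)$ at $q$ fires immediately on $q$'s next activation. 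Once the potential argument is set up, the conclusion is immediate, and this lemma then feeds the liveness arguments for the HEX and TRI correctness proofs.
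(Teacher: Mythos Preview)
Your overall plan is close to the paper's argument, but there is a concrete error in the choice of extremal particle. You pick $p$ to be the permanently expanded particle \emph{closest to the root} (no other permanently expanded particle on the path from $p$ up to the root of its tree in $A(c)$), yet in case~(b) you claim that an expanded child $q$ of $p$---which lies strictly \emph{below} $p$---contradicts that choice. It does not: extremality toward the root says nothing about what lies below. The extremal choice that makes your case~(b) go through is the opposite one: take $p$ to be a permanently expanded particle with no permanently expanded particle in its \emph{subtree}. That is exactly the direction the paper's proof uses, though it phrases it as a recursion rather than an extremal pick: starting from the stuck particle, it walks \emph{down} a branch of $A(c)$ (each blocking child at the tail must itself be expanded, so apply the same reasoning to it) until by finiteness of the forest it reaches a particle that can contract, and then the handovers ripple back up.

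Even with the direction fixed, the sentence ``if $q$ is expanded it is a permanently-expanded particle'' is a gap: a child that is expanded at one moment need not be permanently expanded. What you actually need (and what your potential-function paragraph is reaching for) is that by minimality every particle in $p$'s subtree eventually contracts, and that once some child at $p$'s tail is contracted, the next activation of either $p$ or that child triggers the handover that contracts $p$. The paper sidesteps most of this bookkeeping by first waiting until all particles are active (globally discharging your case~(a)) and then doing a single finite descent along one branch; your potential-function idea would also work but is more machinery than the lemma requires.
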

\begin{proof}
	Consider an expanded particle $p$ in a configuration $c$.
	Note that $p$ must be active.
	If there is an enabled action that includes the contraction of $p$,
	that action will remain enabled until $p$ eventually contracts when $p$ is validated in the current round.
	  So assume that there is no enabled action that includes the contraction of $p$.
	According to Lemma~\ref{lem:forest} and the transition rule from inactive to active particles,
	at some point in time all particles in the system will be active.
	If the contraction of $p$ becomes part of an enabled action before this happens, $p$ will eventually contract.
	So assume that all particles are active but still $p$ cannot contract.
	If $p$ has no children, the isolated contraction of $p$ is an enabled action which contradicts our assumption.
	Therefore, $p$ must have children.
	
	Furthermore, $p$ must read at least one child $p'$ having its $p'.parent$ flag pointing towards $p$ over its tail and all children having their parent flags pointing towards $p$'s tail must be expanded
	as otherwise $p$ could again contract as part of a handover. 
	 	If $p'$ would contract, a handover between $p'$ and $p$ would become an enabled action.
	We can apply the complete argument presented in this proof so far to $p'$
	and so on backwards along a branch in a tree in $A(c)$ until we reach a particle that can contract.
	We will reach such a particle by Lemma~\ref{lem:forest}.
	Therefore, we found a sequence of expanded particles that starts with $p'$
	and ends with a particle that eventually contracts.
	The contraction of that last particle will allow the particle before it in the sequence to contract and so on.
	Finally, the contraction of $p$ will become part of an enabled action and therefore $p$ will eventually contract.
\end{proof}

\subsection{Hexagonal Shape Formation Analysis}
\label{sec:analysisHexagon}

Here, we show that the algorithmic primitives proposed in Section~\ref{sec:HexagonFormation} solve the HEX problem correctly.

\begin{theorem}
\label{thm:solveHexagon}
Our algorithm solves the HEX problem.
\end{theorem}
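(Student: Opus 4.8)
The plan is to show that the algorithm terminates and that upon termination the retired particles form a (possibly incomplete-outer-layer) hexagon centered at the seed. I would split the argument into three parts: (1) a \emph{structural invariant} on the retired set, (2) a \emph{progress} argument showing new particles keep getting retired until none remain, and (3) a \emph{correctness} argument tying the invariant to the hexagon shape.

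For part (1), I would prove by induction on the sequence of retire-events that at every point in the execution the set $R$ of retired particles forms a connected ``spiral prefix'': order the nodes of $\Geqt$ starting at the seed and spiraling outward clockwise (layer $0$ is the seed, layer $\ell$ is the hexagonal ring of $6\ell$ nodes at distance $\ell$, traversed in clockwise order, with a fixed rule for where each ring starts relative to the previous one), and claim $R$ is always exactly an initial segment of this spiral. The key local fact is that the \texttt{snakedir} flag, as set in Algorithm~\ref{alg:retiredConditionHexagon}, points from the last retired particle $p'$ to the \emph{next} node in the spiral order: when a root $p$ sits on a node adjacent to a \texttt{snakedir}-flagged edge of a retired $p'$, $p$ is precisely at that next spiral node, and the \texttt{while}-loop that rotates $i$ clockwise past all retired neighbors reproduces exactly the spiral's turn from one ring to the next (this is the content of Part (d) of Figure~\ref{fig:graph}). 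I would verify this turn rule carefully at the ``seam'' where a ring closes and the next ring begins, since that is where an off-by-one in the clockwise rotation would corrupt the shape — this is the main obstacle, because it requires pinning down the exact geometry of how consecutive hexagonal rings are stitched together by the constant-memory flag-passing rule, and showing it never produces a hole or a doubly-occupied node.

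For part (2), I would argue that the execution cannot deadlock before all particles are retired. By Lemma~\ref{lem:forest} and the inactive$\to$active transition, eventually every particle is active; by Lemma~\ref{lem:contract} every expanded particle eventually contracts, so roots are always eventually contracted and able to evaluate \textsc{RootDirection} and Algorithm~\ref{alg:retiredConditionHexagon}. I would then show that the unique node that currently extends the spiral (the ``next'' node just past the current retired prefix) is reachable: it is adjacent to a retired particle, so any particle occupying it becomes a root, and a contracted root at that node satisfies the retired condition and retires. It remains to see that some active particle actually gets there — here I would use that roots move clockwise around the retired structure via \textsc{RootDirection}, that followers trail their roots (Lemmas~\ref{lem:successor}, \ref{lem:forest}), and that as long as at least one active particle exists the frontier node lies on a root's clockwise path around the boundary, so after finitely many movements it is occupied; hence $|R|$ strictly increases, and since $|R| \le n$ the process terminates with all $n$ particles retired.

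For part (3), once all $n$ particles are retired, part (1) says they occupy exactly the first $n$ nodes of the outward clockwise spiral; by construction these are all of layers $0,1,\dots,\ell-1$ plus a clockwise-contiguous arc of layer $\ell$, for the unique $\ell$ with $1+\sum_{k<\ell}6k \le n < 1+\sum_{k\le\ell}6k$ — i.e.\ a hexagon of radius $\ell-1$ with a partially filled outer ring, which is precisely a goal configuration in $\mathcal{G}$ for HEX. Since retired particles never move, the configuration is terminal, so the algorithm solves the HEX problem. I would keep the geometric bookkeeping (closed-form sizes of rings, the exact start-node offset per ring) to a minimum in the write-up, citing the figure, and concentrate the rigor on the seam-transition claim flagged above.
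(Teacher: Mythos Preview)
Your plan matches the paper's proof closely: both argue termination and correctness separately, both establish correctness by induction on the number of retired particles (showing the retired set is always a prefix of the outward spiral and that \texttt{snakedir} points to the next spiral node), and both lean on Lemmas~\ref{lem:successor}--\ref{lem:contract} for the progress of followers behind roots. Your attention to the ring-to-ring ``seam'' is in fact more careful than the paper, which simply asserts that the clockwise rotation of $i$ lands on the correct next node.

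The one place your outline is thinner than the paper is the sentence ``the frontier node lies on a root's clockwise path around the boundary, so after finitely many movements it is occupied.'' This does not yet rule out roots blocking one another indefinitely along the boundary: Lemma~\ref{lem:contract} only guarantees that expanded particles eventually contract, not that a contracted root whose target node is occupied by another contracted root ever advances. The paper closes this by a short contradiction argument --- if some root $p$ never retires, it is blocked infinitely often by a root $p'$ ahead of it on the boundary; applying the same reasoning to $p'$ yields an unbounded chain of distinct roots on a bounded retired boundary, which is impossible. Equivalently (and this is what your phrasing gestures at), the root immediately clockwise-adjacent to the frontier node has nothing ahead of it and therefore reaches the frontier in finitely many steps; you should make that ``leading root'' explicit rather than asserting reachability for an arbitrary root.
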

\begin{proof}
	We need to show that the algorithm terminates and that when it does, the system is in the shape of a hexagon.
	 According to Lemma~\ref{lem:forest}, every particle $p$ eventually activates.
    According to the spanning forest algorithm, if $p$ is adjacent to the retired structure (initially the structure only contains the seed particle), it becomes a root and moves in a clockwise manner around the retired structure until it eventually reaches the valid position that can extend the hexagon and becomes retired. By contradiction, assume $p$ never becomes retired. Since the number of particles is bounded (and therefore the size of the formed retired structure is bounded), there must be an infinite number of configurations $c_i$ where $p$ had a root particle blocking its desired  clockwise movement around the hexagona retired structure. Let $p'$ be the last root $p$ sees as its clockwise neighbor over the retired structure (since once a particle becomes a root, it will stay connected to the hexagonal retired structure and always attempt to move in a clockwise manner, $p'$ is well-defined).
     Applying the same argument inductively to $p'$, we will get an infinite sequence of roots on the retired structure that never touch a valid spot pointed by $q.snakedir$ flag of an already retired particle $q$,  a contradiction, since the current retired structure (and the number of retired particles) is bounded. Therefore, every root eventually changes into a retired state.
	From Algorithm~\ref{alg:spanningForestAlgorithm}, every follower in the
 neighborhood of a retired particle becomes a root. For every root $q$ with at least one follower child, let $c$ be the first
configuration when $q$ becomes retired.
If $q$ still has any child
 in $c$ then all of its children $p$ become roots. Applying this argument recursively we will reach to a configuration such that there exists no root $q$ having a follower child which proves that eventually every follower becomes a root.
 Putting it all together, eventually all particles become retired and the algorithm terminates.

 Note that it also follows from the argument above that the set of retired particles at the end of the algorithm forms a connected structure (since the particles start from a connected configuration and never get disconnected through the process). 

Now, we need to prove the correctness, i.e., that the resulting structure of retired particles is in a hexagonal form.
 Initially the hexagon only contains the seed particle, therefore the claim holds trivially.
 By induction, let's assume $c$ is the first configuration in which the current formed structure of the retired particles contains $k$ retired particles and by induction hypothesis, assume that those particles form a valid hexagonal shape using $k$ particles. 
 According to  Algorithm~\ref{alg:retiredConditionHexagon}, the only way a  root $p$ can become the $(k+1)^{\text{\tiny th}}$ retired particle during or after $c$, is if it occupies the next valid position pointed by the flag 
  $q.snakedir$, where 
  $q$ was the $k$-th particle to join the hexagonal shape. According to induction hypothesis, the $k$ first retired particles form 
  a hexagonal shape. By pointing to the  next adjacent position in counter-clockwise direction around the outermost retired particles in the current hexagonal structure, the flag $q.snakedir$ points to the next position (according to counter-clockwise direction)
  on the last formed layer of the retired structure, or to a starting position on the next layer once the current layer is full,
 proving the correctness of the constructed shape.
\end{proof}

We would like to measure the amount of the work of the proposed algorithm.

\begin{lemma}
	\label{lem:worstCaseHexagon}
	The worst-case work required by any algorithm to solve the HEX problem is $\Omega(n^2)$.
\end{lemma}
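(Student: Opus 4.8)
\smallskip
\noindent\emph{Proof plan.} The plan is a transportation‑style lower bound: I will pin down one family of hard initial configurations and argue that on it \emph{every} algorithm solving HEX must move the particles a combined graph distance of $\Omega(n^2)$ in $\Geqt$, while each individual particle movement can reduce the remaining ``travel debt'' by only a constant. For a configuration $c$ and a particle $p$, let $\mathrm{pos}(p,c)$ be an arbitrary but fixed node occupied by $p$ in $c$, and set $\Phi(c,c')=\sum_{p} d_{\Geqt}\!\big(\mathrm{pos}(p,c),\mathrm{pos}(p,c')\big)$.

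First I would fix the instance: let $c_0$ be the (connected, hence legal) initial configuration consisting of $n$ contracted particles $p_1,\dots,p_n$ occupying $n$ consecutive nodes along a single axis of $\Geqt$ (the seed may be taken to be $p_1$; its identity plays no role). Since a straight segment along a grid axis is a shortest path in $\Geqt$, $d_{\Geqt}\!\big(\mathrm{pos}(p_i,c_0),\mathrm{pos}(p_j,c_0)\big)=|i-j|$. On the other side, any terminal configuration $c_\ast$ reached by a correct algorithm has its $n$ occupied nodes forming a hexagon of $\Geqt$ (up to a partial outer layer); since the disk of radius $r$ in $\Geqt$ has $3r^2+3r+1$ nodes, such a hexagon has radius $r=O(\sqrt n)$ and diameter $D:=2r=O(\sqrt n)$, so $d_{\Geqt}\!\big(\mathrm{pos}(p_i,c_\ast),\mathrm{pos}(p_j,c_\ast)\big)\le D$ for all $i,j$.

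Second, I would prove the per‑move bound: any single atomic action turning $c$ into $c'$ satisfies $\Phi(c,c')\le 2$. Indeed, an atomic action touches at most two particles; the set of nodes occupied by a particle always has diameter at most $1$ and changes by exactly one node under an expansion or a contraction; and under the ``contracts out of its tail'' convention a contracting particle keeps its head. A short case check over the four move types --- isolated expansion, isolated contraction, push handover, pull handover --- then shows that no particle's representative node moves by more than $1$ and at most two particles move at all. Chaining this along an execution with $T$ movements (actions without a movement leave $\Phi$ unchanged) and using the triangle inequality gives $\Phi(c_0,c_\ast)\le 2T$.

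Third, I would lower‑bound $\Phi(c_0,c_\ast)$ independently of the algorithm by pairing $p_i$ with $p_{n+1-i}$ for $i\le \lfloor n/4\rfloor$ (these pairs are disjoint): by the triangle inequality,
\[
 d_{\Geqt}\!\big(\mathrm{pos}(p_i,c_0),\mathrm{pos}(p_i,c_\ast)\big)+d_{\Geqt}\!\big(\mathrm{pos}(p_{n+1-i},c_0),\mathrm{pos}(p_{n+1-i},c_\ast)\big)\ \ge\ (n+1-2i)-D .
\]
For such $i$ the right‑hand side is at least $n/2-D\ge n/4$ once $n$ is large enough that $D\le n/4$, and there are $\Theta(n)$ pairs, so $\Phi(c_0,c_\ast)=\Omega(n^2)$; combined with $\Phi(c_0,c_\ast)\le 2T$ this yields $T=\Omega(n^2)$ for all sufficiently large $n$, and the bound for all $n$ follows by adjusting the constant. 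The routine parts are the pairing sum and the disk‑size estimate; the only step needing genuine care is the per‑move potential bound --- the case analysis for handovers and for expanded particles must be done so that the fixed representative $\mathrm{pos}(p,\cdot)$ provably never jumps by more than one edge, and this is precisely where the head/tail conventions of the amoebot model must be invoked.
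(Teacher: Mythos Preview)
Your argument is correct and follows the same overall strategy as the paper: take an initial straight line of $n$ particles, observe that any valid hexagonal goal configuration is geometrically compact, and conclude that the total graph distance the particles must traverse is $\Omega(n^2)$, hence so is the number of movements. The paper simply states a per-particle travel bound of the form $2\bigl(i-1-\lceil(i-1)/6\rceil\bigr)$ for the $i$-th particle on the line and sums, whereas you package the same transportation idea via a potential $\Phi$ and a pairing $\{p_i,p_{n+1-i}\}$ combined with the diameter estimate $D=O(\sqrt n)$ for the final hexagon. Your version has the advantage that it never assumes the final hexagon is centered at (or even near) the seed, so it is manifestly a lower bound against \emph{any} algorithm; the paper's phrasing in terms of ``the retired structure'' leans implicitly on the hexagon being anchored at the seed. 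The one place you correctly flag as delicate --- fixing a representative node so that $\mathrm{pos}(p,\cdot)$ moves by at most one edge per atomic action, including both handover variants --- does go through with the head convention under the model's ``contract out of the tail'' rule.
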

\begin{proof}
Consider a line of $n$ particles on $\Geqt$, where the seed particle is located on one end of the line, 
 as an initial configuration of the particles. We label the particles connected to the seed starting with number 0 for the particle adjacent to the seed.  
 The particle labeled $i > 1$ requires at least $2(i-1-\left\lceil (i-1)/M_{i-1}\right\rceil) \geq 2(i-1-\left\lceil (i-1)/6\right\rceil) $ movements until it can lie contracted on the retired structure  
 where $M_j$, $M_j \geq 6$ and $j \geq 1$, indicates the capacity (i.e., the number of the retired particles) of the layer that the retired particle with label $j$ belongs to. 
	 Therefore, any algorithm requires at least  
	 $2\sum_{i = 2}^{n-1} (i-1-\left\lceil (i-1)/6\right\rceil) 
	= \Omega(n^2)$ work. 
\end{proof}
\begin{theorem}
	The algorithm proposed for HEX terminates in $O(n^2)$ work.
\label{thm:workHexagon}
\end{theorem}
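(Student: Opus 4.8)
The plan is to bound the total number of particle movements by a charging/amortization argument. First I would observe that by Theorem~\ref{thm:solveHexagon} and Lemma~\ref{lem:contract}, every particle eventually becomes retired and stops moving, so the total work is finite and it suffices to bound it. The key geometric fact is that the final retired structure is a hexagon of diameter $O(\sqrt{n})$, so at every point during the execution the retired structure is contained in a ball of radius $r = O(\sqrt{n})$ around the seed, and every active (root or follower) particle is within distance $O(\sqrt{n})$ of the seed as well — this follows because the active particles form a spanning forest rooted at particles touching the retired structure (Lemma~\ref{lem:forest}), and along the way the structure "flattens out," so no particle ever travels far from the growing hexagon. I would make this precise by noting that the union of occupied nodes always has $n$ particles and stays connected (shown in the proof of Theorem~\ref{thm:solveHexagon}), and that a root only ever moves along the boundary of the current retired structure, whose perimeter is $O(\sqrt{n})$.

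Next I would bound the work per particle. A particle is inactive (no movement), then possibly a follower, then a root, then retired. While a follower, particle $p$ only moves via handovers that pull it one step along its parent pointer toward the retired structure; while a root, $p$ walks clockwise around the perimeter of the retired structure until it hits the position flagged by \texttt{snakedir}. In both phases each individual movement brings $p$ (or keeps $p$) within the $O(\sqrt{n})$-radius ball, and the number of boundary positions a root can traverse before retiring is at most the current perimeter, $O(\sqrt{n})$. The subtle point is that a root may be \emph{blocked} by other roots ahead of it and thus wait rather than move — but waiting is not a movement, so it does not contribute to work; the number of \emph{actual} movements of a single root before it retires is $O(\sqrt{n})$, because each such movement advances it by one position along a perimeter of length $O(\sqrt{n})$ and it never goes backwards (it always attempts clockwise motion around a monotonically growing structure). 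Summing the follower-phase cost (at most $O(\sqrt{n})$ handover steps to reach the boundary, since the spanning forest has depth $O(\sqrt{n})$... ) — actually, more carefully, I would charge each follower movement to the root at the head of its branch: when a root advances one step, it triggers at most a chain of handovers back along its tree branch, one handover per particle on that branch, and the branch length is $O(\sqrt{n})$. So each of the $O(\sqrt{n})$ root-advances causes $O(\sqrt{n})$ follower movements, giving $O(n)$ movements attributable to one tree, hence $O(n)$ per root. With at most $n$ roots ever created, the total is $O(n^2)$.

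Finally I would combine this with Lemma~\ref{lem:worstCaseHexagon}, which gives a matching $\Omega(n^2)$ lower bound, to conclude that the algorithm is worst-case optimal, matching the claim of Theorem~\ref{thm:hexagon}. The main obstacle I anticipate is making the "every active particle stays within an $O(\sqrt{n})$-radius ball" claim rigorous: one must argue that followers do not pile up into long thin protrusions that would let a particle wander far from the hexagon. The cleanest way around this is the charging scheme above — charge follower movements to root advances rather than bounding each follower's displacement directly — so that one only needs (i) the perimeter bound $O(\sqrt{n})$ on the retired structure (immediate from it being a hexagon on $n$ nodes), (ii) that a root performs $O(\sqrt{n})$ movements before retiring, and (iii) that each root movement induces a handover chain of length at most the depth of its tree, which is $O(n)$ trivially and suffices. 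Even the crude bound "each of $\le n$ roots moves $O(\sqrt n)$ times, each move triggers $\le n$ handovers" would already yield $O(n \cdot \sqrt n \cdot n)$, which is too weak; so the real work is showing the handover chains triggered by a single root over its whole lifetime sum to $O(n)$, which follows because a handover moves a follower one step closer to the retired structure and a follower that reaches the structure becomes a root and leaves the tree — i.e., each follower contributes $O(\sqrt n)$ handovers total, and there are $O(n)$ (follower, tree) incidences, again $O(n^2)$.
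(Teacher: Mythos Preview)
Your plan is substantially more elaborate than necessary and contains gaps that are not easy to close. The repeated appeals to an $O(\sqrt{n})$ perimeter or depth are unjustified: the spanning forest can have depth $\Theta(n)$ (take the initial configuration to be a straight line), so ``branch length is $O(\sqrt{n})$'' is simply false in general, and you acknowledge this two sentences later when you retreat to ``depth of its tree, which is $O(n)$ trivially.'' Likewise the claim that a root makes only $O(\sqrt{n})$ movements before retiring is not argued; the retired structure grows while the root is walking, so bounding by one static perimeter does not suffice. By the end you are juggling three different charging schemes without committing to any of them, and the arithmetic in the last line (``$O(\sqrt n)$ handovers'' times ``$O(n)$ incidences'') does not even yield the $O(n^2)$ you state.

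The paper's proof avoids all of this by not invoking any $\sqrt{n}$ geometry and by not charging between particles at all. It simply shows that \emph{each individual particle} performs $O(n)$ movements, and then multiplies by $n$. For the follower phase: at the moment $p$ becomes a follower, there is a directed path in $A(c)$ from $p$ to the root of its tree (Lemma~\ref{lem:forest}); this path has length at most $2n$ because only $n$ particles, each occupying at most two nodes, are in the system; $p$ advances along this path via handovers, so $O(n)$ follower movements. For the root phase: $p$ walks clockwise along the boundary of the retired structure, and the key observation is that a given root $p$ and a given retired particle $q$ are never adjacent across the same edge more than twice; since there are at most $n$ retired particles (hence $O(n)$ such edges), $p$ makes $O(n)$ movements as a root. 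Summing gives $O(n^2)$ total work with no amortization needed. Your instinct to exploit the hexagon's compactness would be relevant if one were aiming for a tighter bound, but for the stated $O(n^2)$ claim the crude per-particle $O(n)$ bound is both sufficient and far cleaner.
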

\begin{proof}
To prove the upper bound, we simply show that every particle executes $O(n)$ movements. The theorem then follows.
    Consider a particle $p$.
    While $p$ is in inactive or a retired state, it does not move.
    Let $c$ be the first configuration when $p$ becomes a follower. 
     Consider the directed path in $A(c)$ from the head of $p$ to its root $p'$. There always is such a path since every follower belongs to a tree in $A(c)$ by
    Lemma~\ref{lem:forest}
    . Let $P= (a_0, a_1, \ldots, a_m)$ be that path in $A(c)$ where $a_0$ is the head of $p$ and $a_m$ is a child of  $p'$. According to Algorithm~\ref{alg:spanningForestAlgorithm}, $p$ attempts to follow $P$ by sequentially expanding into the nodes $a_0, a_1, \ldots, a_m$. The length of this path is bounded by $2n$ and, therefore, the number of movements $p$ executes while being a follower is $O(n)$.
     Once $p$ becomes a root, it only performs expansions and contractions around the retired structure 
     until it reaches one of the valid positions on the hexagon. 
 Since each root $p$ and a retired particle $q$ never connect from the same edge more than twice, 
 and since the total number of retired particles is at most $n$, therefore the number of movements is bound to $O(n)$ for $p$. 
     Therefore, the number of movements a particle $p$ totally executes is $O(n)$, which concludes the theorem. 
\end{proof}

\subsection{Analysis for Triangular Shape Formation}
\label{sec:analysisTriangle}

Now we need to show that the algorithmic primitives presented in Section~\ref{sec:TriangleFormation} solve the TRI problem correctly.

\begin{theorem}
	\label{thm:solveTriangle}
	Our algorithm solves the TRI problem.
\end{theorem}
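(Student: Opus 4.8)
The plan is to mirror the structure of the proof of Theorem~\ref{thm:solveHexagon}: first show that the algorithm terminates (all particles eventually become retired), and then show by induction on the number of retired particles that the resulting structure is a valid triangle (up to its outer layer). Termination is essentially inherited verbatim from the spanning forest analysis: by Lemma~\ref{lem:forest} every particle eventually activates and joins a tree; by Lemma~\ref{lem:contract} expanded particles eventually contract so roots keep making progress around the bounded retired structure; and the same "infinite sequence of blocking roots on a bounded structure" contradiction used in Theorem~\ref{thm:solveHexagon} shows that every root eventually reaches a position flagged by some retired particle's $snakedir$ and retires, after which its follower children become roots in turn. The only extra thing to check here is that such a position is actually \emph{reachable}, i.e.\ that the $snakedir$ flag is always planted on an edge leading to an unoccupied node outside the current retired structure; I would verify this by inspecting the three cases of Algorithm~\ref{alg:retiredConditionTriangle} together with the invariant (proved in the correctness induction) that the retired structure is always a partial triangle with at most two incomplete rows.

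For correctness, the key is an inductive invariant describing the retired structure after $k$ particles have retired: the retired particles form a set of complete horizontal layers of a triangle anchored at the seed corner, plus at most one partially filled current layer, with the $border[left]$ and $border[right]$ flags correctly propagated along the two sides emanating from the seed, and with the unique outstanding $snakedir$ flag pointing to the next cell to be filled according to the boustrophedon (left-right alternating) order. The base case $k=1$ is the seed, which sets $border[left]$, $border[right]$ on edges $0,1$ and $snakedir$ on edge $0$. For the inductive step, when root $p$ retires at the cell pointed to by $q.snakedir$ (where $q$ was the $k$-th particle), I would analyze which of the three cases of Algorithm~\ref{alg:retiredConditionTriangle} applies. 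In Case~1 ({\sc Border}$(p)$ returns null, so $p$ is in the interior of a layer), $p$ extends the snake by setting $snakedir$ on the opposite edge $(i+3)\bmod 6$, continuing the straight line along the current layer — this keeps the layer on a single grid line and advances the fill position by one. In Case~2 ($p$ is adjacent to a border particle $q$ via a flag but the retired neighbor $p'$ that flagged $p.snakedir$ is distinct from $q$), $p$ has just \emph{reached} the border, so it both propagates the border flag on edge $j$ (the edge opposite the one to $q$) and sets $snakedir$ on that same edge $j$, turning the snake to start heading back — this closes the current layer and opens the next. In Case~3 ($p' = q$, meaning $p$ is the first particle of a newly opened layer, sitting right next to the border), $p$ propagates the border flag and then sets $snakedir$ to $(i+5)\bmod 6$ or $(i+1)\bmod 6$ depending on whether the border is left or right, which is precisely the direction that runs parallel to the previous layer into the triangle's interior. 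Checking that these three edge choices, under the fixed clockwise labeling convention and the geometry of $\Geqt$, indeed produce successive parallel rows of decreasing length that tile a triangle is the heart of the argument, and I would support it with the triangle snapshots in Figure~\ref{fig:trianglesnapshots}.

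I expect the main obstacle to be the geometric bookkeeping in the inductive step: one must argue, purely in terms of edge labels modulo $6$ and the two propagated border flags, that (a) every layer lies on a single line of $\Geqt$ parallel to the seed's edge $0$, (b) consecutive layers are on adjacent parallel lines and are offset so that no cell is ever targeted twice and no cell inside the triangle is skipped, and (c) the alternation of Cases 2 and 3 correctly toggles the traversal direction each layer, so the fill order is a genuine boustrophedon path covering the triangle. Because particles have no global orientation, the border flags are the only mechanism carrying directional information across the structure, so a careful lemma stating that $border[left]/border[right]$ flags always form two straight rays from the seed at the correct $60^\circ$ angle — and that {\sc Border}$(p)$ fires exactly when $p$ lands on one of these rays — is needed as a sub-step before the main induction can go through. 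A secondary subtlety, as in the hexagon case, is the "outer layer may be partially full" caveat: when the particles run out in the middle of a layer the structure is still declared a valid goal configuration, so the invariant must be phrased to make the partially-filled current layer explicit rather than assuming every started layer is completed.
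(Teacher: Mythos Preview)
Your proposal is correct and follows essentially the same approach as the paper: termination is inherited verbatim from the HEX argument, and correctness is shown by induction on the number of retired particles via a case analysis of Algorithm~\ref{alg:retiredConditionTriangle}. Your write-up is in fact more careful than the paper's own proof (which takes three particles as the base case and sketches the induction rather loosely); the only slip is that the layers have \emph{increasing} length as you move away from the seed corner, not decreasing.
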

\begin{proof}
	Again, we need to show that the algorithm terminates and that when it does, the system is the shape of a triangle. The termination part of the proof is identical to that for the HEX problem  presented in Theorem~\ref{thm:solveHexagon}, and hence it only remains to prove the correctness of the TRI algorithm. 
 Assume we have three particles as the base case (to build the smallest size perfect triangle on $\Geqt$). The seed $p*$ sets the $p*.snakedir$ flag and the $p*.border[left]$ flag on its 0-labeled edge. A root particle $q$ might have to move around the seed $p*$ until it connects to edge 0 of the seed through an edges $i$. Since $p$ sees both (border and snake) flags coming from the same particle, $p$ becomes retired while it start constructing a new layer of the triangle and sets its $p.snakedir$ flag such that the next particle continues filling this newly added layer (Case 3 of Algorithm~\ref{alg:retiredConditionTriangle}). Particle $p$ also sets $p.border[left]$ appropriately to propagate the inherited direction of the border from the seed to next layer.  
The only position that the third particle can stop on $\Geqt$ is the one pointed by $p.snakedir$ and it is trivial to see that the resulting retired structure of the three particles is in a triangular shape.  
 Let  $c$ 
be the first configuration in which the current formed structure of the retired particles contains $k$ retired particles, and let $q$ denote the $(k)^{\text{\tiny th}}$ particle to become retired. 
By induction hypothesis, assume that those $k$ particles form a triangle.
  According to  Algorithm~\ref{alg:retiredConditionTriangle}, the only way a root $p$ can become the $(k+1)^{\text{\tiny th}}$ retired particle during or after $c$, is if it occupies the valid position pointed by a flag $q.snakedir$. Depending on the location of $q$ in the triangle, three cases
 may arise. First, consider the case when $q$ is a left border particle (an analogous argument works if $q$ is a right border particle). 
 Since $q$ is the last particle  added to the current valid triangular shape,
 we either have a perfect triangle after the addition of $q$ or   
 we have a perfect triangle plus  particle $q$ as the leftmost particle on a newly created layer.
  In the former case, given the next position pointed by $q.snakedir$, the root $p$ follows Case 2 of algorithm, which means that $p$ will retire on the leftmost valid position on the next layer of the triangular structure, pointed by $q.border[left]$.
  In the latter, $p$ follows Case 3 and will  fill another position of the current layer next to $q$. In both cases the resulting retired structure still forms a valid triangular shape.  	Second, consider the situation where $q$ is not a border particle (Case 1). Therefore, $q$ is located on the last partially filled layer and $q.snakedir$ is set to point to the  next unoccupied snake spot on that layer, which is then filled by $p$, correctly extending the triangular structure, and proving the claim. 
\end{proof}

Again, we would like to measure the work of the proposed algorithm.

\begin{lemma}
	\label{lem:worstCaseTriangle}
	The worst-case work required by any algorithm to solve the TRI problem is $\Omega(n^2)$.
\end{lemma}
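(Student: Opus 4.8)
The plan is to mirror the proof of Lemma~\ref{lem:worstCaseHexagon}: exhibit a single bad initial configuration and lower-bound the work by a purely geometric, algorithm-independent counting argument. I would take as the initial configuration the straight line $L$ of $n$ particles on $\Geqt$ with the seed at one endpoint, and label the non-seed particles $1,2,\dots,n-1$ in order of increasing distance from the seed, so that in $L$ particle $i$ occupies the (contracted) node at graph distance exactly $i$ from the seed's node. Since $L \in \mathcal{I}$, it suffices to show that \emph{every} execution of \emph{any} algorithm solving TRI and starting from $L$ performs $\Omega(n^2)$ movements.

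First I would pin down the geometry of goal configurations. The seed is always retired and hence never moves, and by the construction of the TRI algorithm together with the correctness statement of Theorem~\ref{thm:solveTriangle}, every configuration in $\mathcal{G}$ occupies (a subset of) a side-$s$ triangular patch of $\Geqt$ whose apex is the seed's node. Since a complete side-$(s-1)$ patch already contains $(s-1)s/2$ cells yet must hold fewer than $n$ particles, we get $s = O(\sqrt n)$; moreover, in $\Geqt$ every cell of a side-$s$ triangular patch lies within graph distance $s-1$ of its apex. Hence there is a constant $c$ such that in every goal configuration the final node of every particle lies within distance $c\sqrt n$ of the seed's node.

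Next I would bound, for each particle, how many movements it must make. Observe that a single atomic movement (an expansion, a contraction, or a handover) changes the set of nodes occupied by at most two particles, and each such change shifts the occupied set of a particle by at most one grid edge; consequently the quantity ``distance of a particle from the fixed seed node'' changes by at most one per such change. In $L$, particle $i$ is at distance $i$ from the seed, and in the final configuration it is at distance at most $c\sqrt n$ from the seed, so the occupied set of particle $i$ must change at least $i - c\sqrt n$ times over the execution. Summing over all particles and using that each movement accounts for at most two such changes, the total number of movements $W$ satisfies $2W \ge \sum_{i} \max(0,\, i - c\sqrt n) = \Omega(n^2)$, so $W = \Omega(n^2)$.

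The main obstacle is making the two ingredients fully rigorous: (i) the geometric claim that every TRI goal configuration fits in a triangular patch of side $O(\sqrt n)$ with apex at the seed, so that all particles end within $O(\sqrt n)$ of the seed node (this needs the correctness of Theorem~\ref{thm:solveTriangle} plus a short computation of the radius of a triangular patch in $\Geqt$); and (ii) the Lipschitz/bookkeeping claim that distance from a fixed node is $1$-Lipschitz along an execution and that each atomic movement touches at most two particles, which must be checked uniformly for expansions, contractions, and handovers. Both are routine and entirely analogous to the corresponding parts of Lemma~\ref{lem:worstCaseHexagon}; the only genuine difference from the hexagonal case is that the triangle's ``layers'' have growing rather than constant capacity, which is why the relevant bound is stated directly as a radius of $O(\sqrt n)$ rather than through a per-layer capacity $M_j \ge 6$.
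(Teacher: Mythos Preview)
Your proposal is correct and follows essentially the same route as the paper: take the straight line with the seed at one end as the bad instance, lower-bound each particle's required displacement by the difference between its initial distance $i$ from the seed and its final distance in the triangle, and sum. The only cosmetic discrepancy is that the paper does \emph{not} switch to an $O(\sqrt n)$ radius bound as you anticipate; it reuses the layer-capacity trick verbatim with $M_j\ge 2$, obtaining $2\sum_{i=1}^{n-1}\bigl(i-1-\lceil (i-1)/2\rceil\bigr)=\Omega(n^2)$, so your sharper radius estimate is a harmless refinement rather than a necessary change.
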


\begin{proof}
With a very similar argument we had in Lemma~\ref{lem:worstCaseHexagon} one can verify that it is required to have at least $2\sum_{i = 1}^{n-1} (i-1-\left\lceil (i-1)/2\right\rceil)
	= \Omega(n^2)$ work for the algorithm to terminate.
\end{proof}

\begin{theorem}
	The algorithm for TRI terminates in $O(n^2)$ work.
\label{thm:workTriangle}
\end{theorem}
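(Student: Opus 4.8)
The plan is to follow the proof of Theorem~\ref{thm:workHexagon} essentially verbatim: it suffices to show that every particle performs only $O(n)$ movements, and then summing over the $n$ particles gives the claimed $O(n^2)$ bound. A particle moves only while it is a follower or a root (it is stationary while inactive or retired, by Algorithm~\ref{alg:spanningForestAlgorithm} and Algorithm~\ref{alg:retiredConditionTriangle}), and since a particle passes through each of these two states at most once, it is enough to bound the number of movements in each phase separately.

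For the follower phase the argument is identical to the one for HEX, because followers run the shared spanning-forest primitive (Algorithm~\ref{alg:spanningForestAlgorithm}) regardless of the target shape. Let $c$ be the first configuration in which $p$ is a follower. By Lemma~\ref{lem:forest}, $p$ lies in a tree of $A(c)$, so there is a directed path $P = (a_0, a_1, \ldots, a_m)$ in $A(c)$ from the head $a_0$ of $p$ to a child $a_m$ of its root. Its length $m$ is at most $2n$, since the $a_i$ are distinct occupied nodes and there are at most $2n$ of those. By the follower rules of Algorithm~\ref{alg:spanningForestAlgorithm}, $p$ advances along $P$ through a sequence of handovers and contractions, so it performs $O(n)$ movements before it becomes a root.

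For the root phase we again reuse the HEX analysis. While $p$ is a root it only expands and contracts around the boundary of the current retired structure, always in the clockwise direction prescribed by {\sc RootDirection}$(p)$, until it reaches an edge flagged $q.snakedir$ of a retired particle $q$ and immediately retires (Algorithm~\ref{alg:retiredConditionTriangle}). The retired structure stays connected throughout the execution, and --- because the triangular snake fills each layer contiguously, as established in the correctness proof of Theorem~\ref{thm:solveTriangle} --- each intermediate retired structure is a (partial) triangle and hence hole-free, so its boundary is a simple cycle with $O(n)$ edges. Exactly as in Theorem~\ref{thm:workHexagon}, a given root $p$ and a given retired particle $q$ are connected through any fixed edge only a constant number of times over $p$'s root phase: $p$ advances monotonically along this simple boundary cycle, and the inductive ``there is always a leading root'' argument from Theorem~\ref{thm:solveTriangle} bounds how long $p$ can be blocked by other roots, so $p$ circumnavigates the retired structure only $O(1)$ times. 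Since there are at most $n$ retired particles, $p$ performs $O(n)$ movements as a root. Hence every particle performs $O(n)$ movements in total, giving $O(n^2)$ work; together with Lemma~\ref{lem:worstCaseTriangle} this shows the bound is worst-case optimal.

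The step I expect to require the most care is the claim in the last paragraph that a root circumnavigates the retired structure only $O(1)$ times, i.e., that the more intricate layer-switching logic of Algorithm~\ref{alg:retiredConditionTriangle} (Cases~1--3), together with the propagated $border[\cdot]$ flags, cannot trap a root into extra loops. I would discharge it by appealing to Theorem~\ref{thm:solveTriangle}, which already guarantees that every intermediate retired structure is a valid partial triangle: such a structure is simply connected, its boundary is a simple cycle, and at any time there is a unique, boundary-resident target position flagged $snakedir$, so the counting argument used for the hexagon carries over without change.
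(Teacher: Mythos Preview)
Your proposal is correct and follows exactly the approach the paper takes: the paper's own proof of this theorem is a single sentence referring back to Theorem~\ref{thm:workHexagon} and asking the reader to substitute a triangular retired structure for a hexagonal one, and you have faithfully reproduced that argument. If anything, you have supplied more detail than the paper does, particularly in flagging and justifying the $O(1)$-circumnavigation claim for the root phase via the simple-connectedness of the partial triangle established in Theorem~\ref{thm:solveTriangle}.
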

\begin{proof}
Same argument we have in Lemma~\ref{thm:workHexagon} holds here too. We just need to assume a triangular shape instead of a hexagonal one. 
\end{proof}

\end{document}